\def\@abssec#1{\vspace{.05in}\footnotesize \parindent .2in
{\bf #1. }\ignorespaces}
\newtheorem{lemma}{Lemma}
\newcommand{\E}{{\mathbb E}}
\renewcommand{\le}{\leqslant}
\renewcommand{\ge}{\geqslant}
\begin{document}

\title[Bubble formation and collapse]{A simple model for asset price bubble formation and collapse}
\author{Alexander Kiselev}
\thanks{Department of
Mathematics, University of Wisconsin, Madison, WI 53706; e-mail:
kiselev@math.wisc.edu}
\author{Lenya Ryzhik}
\thanks{Department of
Mathematics, Stanford University, Stanford, CA 94305; e-mail:
ryzhik@math.stanford.edu}

\subjclass{Primary: 91B24, 91G80;  Secondary: 34K50, 91C99}


\begin{abstract}
We consider a simple stochastic differential equation for modeling
bubbles in social context. A prime example is bubbles in asset pricing, but similar mechanisms
may control a range of social phenomena driven by psychological factors (for example, popularity of rock groups,
or a number of students pursuing a given major). Our goal is to study the simplest
possible model in which every term has a clear meaning and
which demonstrates several key behaviors. The main factors that
enter are tendency of mean reversion to a stable value,
speculative social response triggered by trend following and random
fluctuations. The interplay of these three forces may
lead to bubble formation and collapse.
Numerical simulations show that the equation has distinct regimes depending on the values of the parameters.
We perform rigorous analysis of the weakly random regime, and study the role of change in fundamentals
in igniting the bubble.
\end{abstract}
\maketitle

\section{Introduction}

The best known and well-studied examples of social bubbles are speculative bubbles in
asset pricing.
These have a long history that in some form can be traced back at least to ancient Rome \cite{Chancellor}.
First well-documented examples of speculative bubbles are the tulip mania in Netherlands in 1637 (see e.g.
\cite{Dash,Garber,Goldgar}) and the South Sea Company bubble of 1720
\cite{Carswell,Cowles}. It is after the latter boom and collapse
that the term "bubble" was coined. Generally, economists use the
term "bubble" to describe an asset price that has risen above the
level justified by the economy fundamentals, as measured by the
discounted stream of expected future cash flows that would accrue to
the owner of the asset. In practice, of course, what makes it
difficult to diagnose a bubble is the uncertainty present in any
calculation of the future returns and of the appropriate discount rates.

Major modern examples of bubbles include the Japanese asset price
bubble of 1980's \cite{Shiratsuka} which involved both real estate
and equities, the Dot.com bubble \cite{Cassidy} in the US
information technology stocks, and the recently punctured real
estate and wider credit bubble \cite{SorW}, which was centered in the US and UK
but was present and affected a much wider range of markets. There
have been many more smaller and local bubbles, involving regional
real estate, stocks in a certain sector, and individual stocks
\cite{McKay,Kind}.

A simple intuitive explanation of the mechanism behind bubble
formation has been suggested by Shiller \cite{Shiller}: 
"If asset prices start to rise strongly, the success of some
investors attracts public attention that fuels the spread of
the enthusiasm for the market. New (often, less sophisticated) investors
enter the market and bid up prices. This "irrational exuberance"
heightens expectations of further price increases, as investors
extrapolate recent price action far into the future. The market's
meteoric rise is typically justified in the popular culture by some
superficially plausible "new era" theory that validates the
abandonment of traditional valuation metrics. But the bubble carries
the seeds of its own destruction; if prices begin to sag, pessimism
can take hold, causing some investors to exit the market. Downward
price motion begets expectations of further downward motion, and so
on, until the bottom is eventually reached".

It is reasonable to assume that bubble formation in asset
prices has its root in some basic aspects of the human social
psychology, which may manifest itself as soon as some necessary
conditions (such as existence of basic liquid markets) are
satisfied. Moreover, it is likely that bubble formation and collapse
phenomena appear in areas of social dynamics beyond the asset price
formation. One example may be attraction of an excessive number of
people to some profession, after there has been a significant
breakthrough. The mechanism of the bubble in this case can be modeled very
similarly to the price formation: explosive growth attracts newcomers
who project growth into the future.
Once the readily available new applications have been worked
through, however, the profession may suffer from overcapacity of
labor. Other relevant example can be the number of people following
certain fashion or popularity of a rock group.


The existing research literature on asset price bubble formation is too enormous
to attempt a complete overview in this article.  There are many
different approaches and directions. We only indicate a few
branches. In finance literature, much work has been devoted to finite
difference price evolution models. The first model of so-called
rational bubbles based on finite differences was proposed by
Blanchard and Watson \cite{BW1982}. The asset price satisfies a finite difference equation
which basically expresses the no-arbitrage condition. The bubble component may be present
due to the non-uniqueness of the solution. For discussion and more refined versions see also e.g.
\cite{AB,AS92,Cam89,Fuk98,KS08,LS99}. Continuous time models based on strict local martingale approach
have been considered in \cite{CH,HLW,JPS,LW1,LW2}, where further
references can be found. Another approach uses
finitely additive measures (charges) to model the pricing bubbles (see e.g. \cite{G1,GL,JM}). The
charges and local martingales approaches were recently shown to be equivalent in \cite{JPS}.
All the models discussed so far have no-arbitrage conditions explicitly built into them.

Closest to our interest here is the direction of agent-based models of price formation. The primary motivation
of this class of models is more realistic replication of the statistical properties observable in the time
series of stock prices returns. The agent-based models usually incorporate
a number of agents following different strategies with effective price equation
driven by the balance of supply and demand. The no-arbitrage conditions are typically absent,
even though for sophisticated models constructing consistently winning strategies may be challenging.
Two primary classes of strategies that received most attention are trend following ("chartist")
and fundamentalist. The dynamics generated by interaction of trend following ("chartist") trading
strategies and fundamentals-driven investors has been the focus of many works
(see e.g. \cite{baum,dhuang,D1,farmer,IS,lux}). Other authors pursued more sophisticated models based on
heterogeneous and adaptive beliefs where strategies may vary in time (see e.g. \cite{BD,BH,LM1,LM2,L,SX,GH}).
Yet another direction of research is based on parallels with statistical physics
and trading network models.  An elegant
theoretical physics-inspired model leading to superexponential
growth in prices while speculative heat lasts was proposed in
\cite{Derman1}. We refer to \cite{SZSL,Sorn1} for recent reviews of different
directions and contributions, where many more references can be found.

Our goal in this paper is to introduce and investigate a simple stochastic differential
equation which models creation of bubbles in asset pricing. The equation has three key terms.
The first is the mean reversion term, driving price back to the fundamental value. It models
the contribution of the fundamentalist trading strategy, and is similar to what has been considered before.
The second is random term, which models exogenous factors. The third term is the \it speculative
\rm or \it social response term, \rm which models psychology of trend followers.
Though the latter term is also similar in spirit to what has been considered before (see e.g. \cite{dhuang,GH,farmer,lux}),
the exact form is different in that our term is nonlinear, which is essential for certain dynamical features.

The direction of our work is different from earlier literature in several respects. First, we intentionally keep
the equation conceptually as simple as possible -- though also sufficiently complex to produce rich set of phenomena.
Thus we do not aim at this point to produce a model that can closely explain all of the observable statistical features of complex
modern markets, but rather look for the simplest signature model of bubble and collapse, perhaps the next order of
approximation to the reality after the random walk. One motivation is that even if not exhaustive, a simple model has a better chance of
being capable of calibration to the empirical data without overfitting.
There are only three independent parameters in the model, and we investigate the behavior
of the model across the possible values of the parameters. In one of the regimes -- small randomness -- the model
can be analyzed rigorously, and this provides valuable insight into the possible behavior in other regimes.
Secondly, randomness plays more important role in our model than usual. In particular, in the absence of the change in
fundamentals, randomness is entirely responsible for
igniting the bubble and causing the bubble to collapse. This means that the deterministic part of our dynamics
does not suggest any typical time scales for these processes, making them essentially random, and
similar to Poisson process. Indeed, the bubble collapse (or ignition) are notoriously hard to predict. 
Third, we focus on the influence and role of the change in fundamentals in igniting the bubble, which is arguably
a key reason behind the initiation of many bubbles. In our model's framework, a change in fundamentals may significantly
increase the likelihood of bubble ignition.

Here is the equation that we are going to study, written for the logarithm of the asset price:
\begin{equation}\label{bubble}
dP(t) = -\mu (1-e^{P_0-P(t)})dt + \sigma dB_t + \nu
S(P(t)-P(t-T))dt.
\end{equation}
The parameters $\mu,$ $\sigma$ and
$\nu$ regulate the strength of the mean reversion, random and
speculative terms respectively, 
$P_0$ is the fundamental price of the asset (that is assumed constant for
the moment but will be allowed to vary later), $B_t$
is the Brownian motion and $S(x)$ is the social response or
speculative function. It will be assumed to have the
following properties: $S(x)$ is odd, monotone increasing and $S(x)
\rightarrow 1$ as $x \rightarrow \infty.$ The function $S$ has the
natural structure reinforcing the existing trend in price dynamics.
It is important, however, that the reinforcement strength
depends on the rate of past returns in a certain way. We will assume
that $\nu\gg\mu,$ so that for large values of the argument, the speculative
function can dominate mean reversion. On the other hand, we will
assume that for small values of the argument, the speculative fever
is negligible and is dominated by mean reversion; in particular, $\nu
S'(0) \ll \mu.$

The social response term of this structure looks reasonable for
many problems where human psychology is concerned. In our opinion, it is
reasonable that the size of
the social response term varies nonlinearly depending on the
strength of the trend. Indeed, weaker trends in pricing are not as
eye catching and generate significantly less attention, news and
press coverage. The exact shape of the function $S,$ as far as it satisfied the
properties outlined above, did not affect much the qualitative properties of the
time series in our numerical experiments. One
could argue, however, in favor of a more subtle dependence of the social
response on the past trend. 
This would be equivalent to using a
more complex memory integral operator as the argument for the
social response function, rather than just $P(t)-P(t-T).$ This
more general case is interesting, and may be crucial for matching
realistic price dynamics properties of liquid financial instruments (see
the discussion section at the end of the paper for more details).
However, in the current paper we restrict
ourselves to a simple time delay for which equation \eqref{bubble}
already exhibits rich behavior.

Note that the parameters $\mu$ and $\nu$ have dimension $1/t$, and
$\sigma$ has dimension $1/\sqrt{t}.$  Without loss of generality, we
can set the delay time $T$ in \eqref{bubble} equal to one. Indeed,
any time delay $T$ can be reduced to this case by rescaling time and
other coefficients. For the rest of this paper, we will fix $T=1.$


The model \eqref{bubble} has several essential regimes,
in particular the stable mean reversion dynamics, bubble, and collapse.
These regimes are especially clear cut when the randomness is small, but become less
evident when the randomness increases.
The transition probabilities between the regimes depend on the
parameters of the model. To develop intuition, in
Section~\ref{det} we look at the possible regimes of the deterministic model
with
$\sigma =0.$ The deterministic equation
possesses the same key regimes as the random
model, but no switching between the regimes is possible without
randomness.
In Section~\ref{her} we provide an heuristic simplified picture for
the bubble equation behavior, which is useful to keep in mind when dealing
with the general case.
In Section~\ref{reg}, we look at the random case. 
Here we show that, provided the randomness is small, our basic
regimes remain stable with high probability. We also derive
estimates on the probability of switching between different regimes,
which becomes possible with randomness.
In Section~\ref{forc}, we show that the probability of bubble
creation can be enhanced greatly by manipulation of the stable value
$P_0.$ This may correspond to a strong earnings report exceeding
expectations in the case of a stock, or to a stimulative interest
rate policy in the case of the real estate market or commodities.
In Section~\ref{num}, we present some basic numerical simulations,
illustrating some of the results we prove and testing more general
parameter values. 
A more extended set of simulations, in particular studying the statistical properties of the
model (presence of fat tails for the distribution of returns? clustering of volatility? correlation
functions?) are postponed to a   later publication.
In Section~\ref{disc}, we discuss various extensions and
generalizations of our model addressing more realistic trend-following
speculation term as well as modeling bubbles in spatially distributed systems.

\section{The Deterministic Case and Regimes}\label{det}

As a first step, we look at various regimes in the deterministic bubble
equation dynamics, that is, we set $\sigma =0$.
Let us introduce the shortcut notation
\[
f(P,P_0) = -
\mu(1-e^{P_0-P(t)}).
\]
The properties of the function $S$ will be
very important, so let us recall once again that the function $S$
is odd, increasing, and $S(x) \rightarrow 1$ as $x \rightarrow
\infty.$ In addition to this, we will assume that there is no
oscillation in $S$: more precisely, the second derivative of $S(x)$
is continuous, positive on $[0,b)$ for some $b>0$ and negative for
$x \in (b,\infty).$ The symmetry (oddness) assumption is not necessary and is
made simply for convenience. The assumption of the saturation
(existence of a finite limit of $S(x)$ as $x\to\infty$,
rather than a continued growth for large values of $x$) may be
debatable. Some authors suggest that often one can discern super-exponential
growth approaching the height of a bubble \cite{Derman1,JPS,Sorn1}.

Throughout the paper, we will make certain assumptions on the
parameters and functions appearing in bubble equation \eqref{bubble}. Our first
assumption ensures this equation possesses the \it mean
reversion regime, \rm where psychology has negligible effect on price dynamics. \\

\it Assumption I. \rm $\nu S'(0) \ll \mu.$ \\

We will need the following more technical version of this assumption
to facilitate the proofs. \\

\it Assumption I'. \rm There exists $\delta_m >0$ such that
we have
\[
f(P_0+x,P_0) + \nu S(x+\delta_m)< - c_m \mu \delta_m,
\hbox{ for
$\delta_m/4 \leq x \leq \delta_m,$  }
\]
and
\[
f(P_0+x,P_0) + \nu S(x-\delta_m) > c_m \mu \delta_m,
\hbox{ for $-\delta_m/4 \geq x \geq - \delta_m,$}
\]
where $c_m$
is a fixed positive constant. \\ 

As we mentioned already in the introduction, Assumption I is
quite natural. After all, if the trend is small, there is not much
social excitement about it. It is reasonable to assume that the
speculative fever starts only when the trend is significant.

We now define the mean reversion regime and prove rigorously its
stability in the absence of randomness.

\it Mean reversion regime. \rm Assume that $|P(t)-P_0| < \delta_m$
for $t \in [t_0-1,t_0].$ Then the same holds true for any $t>t_0.$
\begin{proof}
Assume that $P(t)$ violates the corridor $|P(t)-P_0|<\delta_m$
at some point and let $t_1$
be the minimal time greater than $t_0$ when $P(t_1)-P_0 = \delta_m$
(the case $P(t_1)-P_0 =-\delta_m$ is similar). Then
\begin{equation}\label{jul61}
P'(t) = f(P_0+\delta_m,P_0) + \nu S(P(t)-P(t-1)) \leq
f(P_0+\delta_m,P_0)+ \nu S(2\delta_m) <0
\end{equation}
by definition of $t_1$ and
Assumption I'. This is a contradiction.
\end{proof}

\begin{figure}
\begin{center}
\scalebox{0.75}{\includegraphics{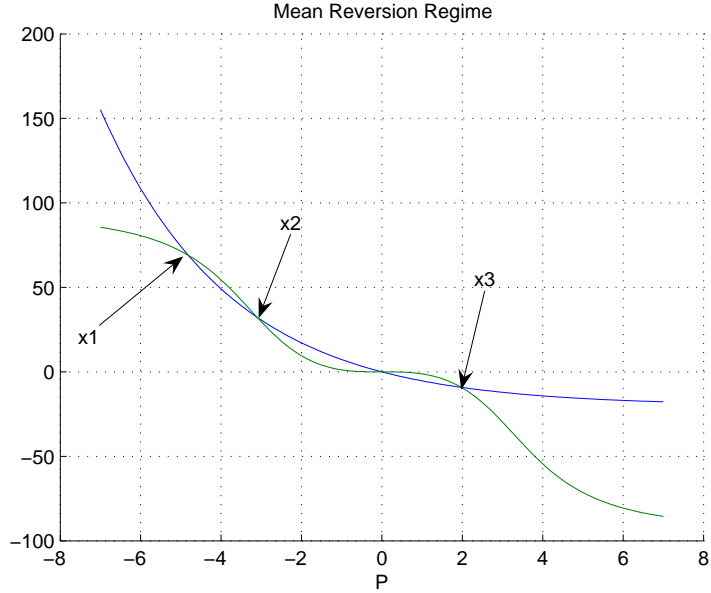}} \caption{The graphs of
$f(P,0)$ and $-\nu S(2P).$ The region between $x_2$ and $x_3$
corresponds to the mean reversion regime.} \label{mr}
\end{center}
\end{figure}

Figure~\ref{mr} illustrates the mean reversion regime (the value
of $P_0$ is set to be zero here). The region between points $x_2$ and $x_3$ on
this figure is the range of values of $P$ where the mean reversion
term dominates the social response function, as in (\ref{jul61}). The parameter
$\delta_m$ is the characteristic price scale of the mean reversion
regime. Existence of such scale satisfying Assumption I' follows
from Assumption I since $\partial f(P_0,P_0)/\partial P=-\mu.$ Below we will
sometimes consider $P_0$ depending on time. In that case, Assumption
I' holds at each $t$ with the corresponding $P_0(t).$

Our second assumption ensures existence of the bubble and collapse
regimes. \\

\it Assumption II. \rm Equation $\mu +x = \nu S(x)$ has exactly
two positive roots, $x_5$ and $x_6,$ and exactly one negative root,
$x_1.$ \\

It follows that equation $x= \nu S(x)$ has exactly two positive roots
 and exactly two negative roots, $x_2$ and
$x_3$ (due to symmetry). The smaller of these positive
roots will be denoted $x_4.$

Let us explain why Assumption II is relevant for the presence of the
bubble regime. Intuitively, a sustained bubble is driven by the
balance between social response function, growth in $P$ and the mean
reversion term. For large values of $P,$ the mean reversion term is
basically equal to $-\mu.$ Setting $P'=x,$ and approximating
 $P(t)-P(t-1)\approx P'(t)$,
we arrive at the balance
\begin{equation}\label{jul62}
x+\mu =\nu S(x)
\end{equation}
for the stable bubble regime. Given our assumptions on the
structure of the response function $S,$ equation (\ref{jul62})
can have two, one (this is a degenerate case) or zero positive roots. In the last two
cases, the nonlinear response term is just too weak to sustain a
bubble. Therefore, Assumption II is
essentially just about the sufficient strength of the social response
term. Observe that the smaller positive root, $x_5,$ is unstable,
while $x_6$ is stable (see Figure~\ref{bc1}). The root $x_6$ then
determines the asymptotic rate of growth in a bubble. The root
$x_5,$ on the other hand, determines the range of stability of the
developed bubble regime (should a random fluctuation reduce the rate
of growth below $x_5,$ the bubble will puncture and collapse will
likely begin). Let us set $\delta_b = (x_6-x_5)/2$ -- this parameter
can be regarded as the typical stability scale of the bubble regime
(it has dimension of price/time). Note that it is clear from the
definition that $x_5 > \delta_m.$

The positive root $x_4$ is relevant for values of $P$ not too far
away from $P_0.$ When the mean reversion is small, $x_4$ is the
threshold rate of growth for the start of the bubble. However, if
the rate of growth will not accelerate quickly to beyond $x_5,$ the
beginning bubble may slow down and turn with the growth of $P$ -- due to
the mean reversion term getting stronger and approaching the value $-\mu.$
\begin{figure}
\begin{center}
\scalebox{0.75}{\includegraphics{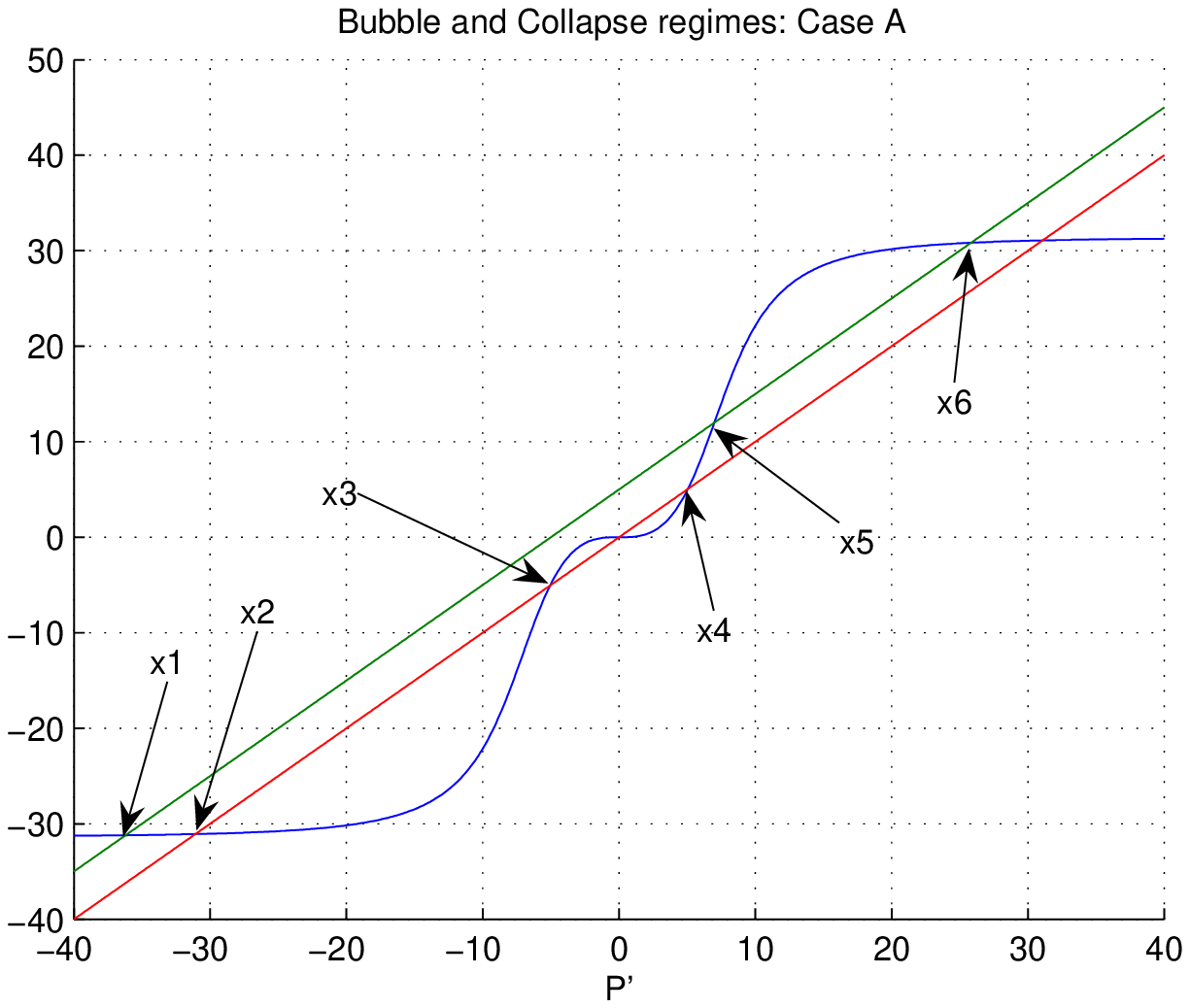}} \caption{} \label{bc1}
\end{center}
\end{figure}
Similarly to $x_6,$ the stable negative root $x_1$ determines the
rate of collapse for large values of $P.$ The collapse regime is
extremely stable for large $P.$ The system tends to go into a collapse with
the rate of decay aproaching $x_1$ if only the rate of growth
falls below~$x_5$.

Observe that without Assumption II, under our structure assumptions on the function $S$ there is a
possibility that (\ref{jul62}) has three
negative roots (or, in a degenerate case, two negative roots): see
Figure~\ref{bc2} for illustration.
In this case, the smaller
(in absolute value) stable
root $z_2$ determines the likely rate of decay right after the
puncture of a bubble. We call such scenario "a non-panic collapse",
since in many cases the primary driver of such decay is the mean
reversion force that is close to the value $-\mu$, and the social
response function $S$
does not play a big role. If $\mu$ is small, this regime may look
like simply a pause in the bubble growth. A switch either into the
bubble or into steeper collapse is possible if
random fluctuations are present. Alternatively, the "non-panic collapse" can
simply bring the price all the way down to the equilibrium value $P_0$, and
switch to the mean reversion regime. While this configuration is
interesting and adds an extra phenomenon which deserves to be
studied, in this paper we restrict ourselves to the simpler case covered by
Assumption II: most collapses in practice tend to involve panic.
The case with three negative roots will be studied elsewhere.

For values of $P$ closer to
equilibrium, where $f(P,P_0)$ cannot be regarded as just $-\mu,$ the
stable negative root $x_2$ of $x = \nu S(x)$ approximates the rate of
collapse decay. The unstable root $x_3$ provides a bound for the
range of stability of the collapse (the collapse is definitely
stable up to decay rates of $x_3;$ it really is more stable since
the mean reversion works in favor of collapse for values
$P(t)>P_0$). We set $\delta_c = (x_3-x_2)/2,$ and regard $\delta_c$
as the typical stability scale of the collapse regime. Let $a_b =
x_5 +\delta_b/2,$ and $a_c = x_3 -\delta_c/2.$ Observe that
\begin{eqnarray}\label{bubstab}
\nu S(a_b)-\mu-a_b = c_b \delta_b >0, \\
\label{colstab} \nu S(a_c) - a_c = -c_c \delta_c <0,
\end{eqnarray}
where $c_b,c_c>0$ are some fixed constants depending on $S.$
These
inequalities, which follow from Assumption II, will be
useful for us in the random case, where we will need some cushion to
ensure the stability of the bubble and collapse regimes.

\begin{figure}
\begin{center}
\scalebox{0.75}{\includegraphics{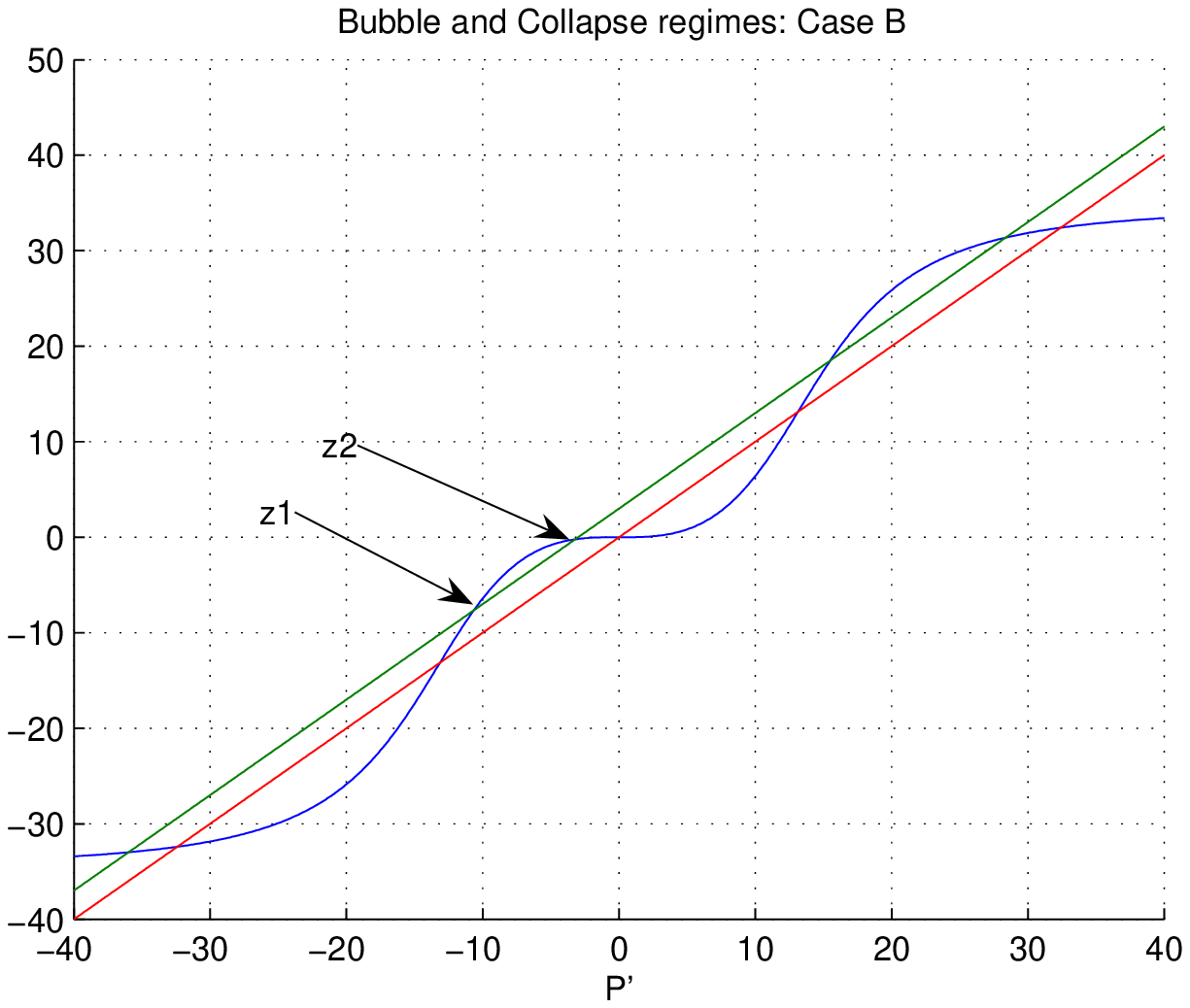}} \caption{} \label{bc2}
\end{center}
\end{figure}





We now describe the bubble and collapse regimes and their stability in
the deterministic case.

\it The bubble regime. \rm Assume that $P(t)-P(t-1) > x_5$ for all $t
\in [t_0-1,t_0].$ Then we also have $P(t)-P(t-1) > x_5$ for all
$t>t_0.$ Moreover, $P'(t) \rightarrow x_6$ as $t \rightarrow
\infty.$ \\
\begin{proof}
Assume, on the contrary, that there exists a time $t_1$ such that
$P(t_1)-P(t_1-1)=x_5,$ and $t_1$ is the smallest such time greater
than $t_0.$ However in that case, for every $t \in [t_1-1,t_1],$ we
have
\[
P'(t) \geq -\mu + \nu S(P(t)-P(t-1)) > -\mu +\nu S(x_5) = x_5
\]
due to Assumption II. It follows that $P(t_1)-P(t_1-1)>x_5$, which is a
contradiction. Hence, $P'(t)> x_5$ and
\begin{equation}\label{jul63}
\hbox{$P(t)-P(t-1)>x_5$ for all $t>t_0$.}
\end{equation}

Observe that, as a consequence, $P(t) \rightarrow \infty$ as $t
\rightarrow \infty$. Let us denote
\[
P_n = {\rm min}_{t \in [t_0+n-1,
t_0+n]} P(t),
\]
then the sequence $P_n$ is increasing by (\ref{jul63}), and, moreover,
$P_n\ge C_0+nx_5$, with some constant $C_0$ that depends only on the data
on the time interval $[t_0-1,t_0]$.
We also define
\[
p^+_n = {\rm max}_{t \in [t_0+n-1,t_0+n]}
P'(t),
\]
and
\[
p^-_n = {\rm min}_{t \in [t_0+n-1, t_0+n]} P'(t).
\]
We claim that
\begin{equation}\label{jul71}
\lim_{n\to\infty}p_n^-=\lim_{n\to\infty}p_n^+=x_6.
\end{equation}
To this end we make the following observations. First,
note that $p_n^->x_5$ and $p_n^+>x_5$ for all $n$.
We also claim that
\begin{equation}\label{jul73}
\hbox{if $p_{n}^-\le x_6$ then $p_{n+1}^-\ge p_{n}^-$}.
\end{equation}
Indeed,  assume that $p_n^-\le x_6$. For any $t\in[t_0+n,t_0+n+1]$ we have
\begin{equation}\label{jul74}
P'(t)=-\mu+\mu e^{P_0-P}+\nu S(P(t)-P(t-1))> -\mu+\nu
S(\min(p_n^{-},p_{n+1}^-)),
\end{equation}
thus
\begin{equation}\label{jul75}
p_{n+1}^-> -\mu+\nu S(\min(p_n^{-},p_{n+1}^-)).
\end{equation}
Now, if (\ref{jul73}) fails then (\ref{jul75}) implies that
\begin{equation}\label{jul76}
p_{n+1}^->-\mu+\nu S(p_{n+1}^-).
\end{equation}
As $p_{n+1}>x_5$ it follows from (\ref{jul76}) that $p_{n+1}^->x_6$, hence we
have
the inequalities
\[
x_6\ge p_{n}^-\ge p_{n+1}^->x_6,
\]
which is a contradiction. Thus, (\ref{jul73}) holds.

Next, we show that
\begin{equation}\label{jul78}
\hbox{if $p_k^-\ge x_6$ then $p_{n}^-\ge x_6$ for all $n\ge k$.}
\end{equation}
It suffices to show that (\ref{jul78}) holds for $n=k+1$. However,
it follows from
(\ref{jul75}) that either
\begin{equation}\label{jul79}
p_{k+1}^-> -\mu+\nu S(p_k^{-})\ge -\mu+\nu S(x_6)=x_6,
\end{equation}
or
\begin{equation}\label{jul79.5}
p_{k+1}^-> -\mu+\nu S(p_{k+1}^-),
\end{equation}
which also implies that $p_{k+1}^->x_6$. Therefore, (\ref{jul78}) holds.

The next step is to show that there exists a constant $c_0$  so that
\begin{equation}\label{jul710}
\hbox{ if $p_n^+\ge x_6+c_0e^{-nx_5}$ then $p_{n+1}^+\le p_n^+$.}
\end{equation}
Note that
\[
p^+_{n+1} \leq -\mu + \mu e^{P_0-P_n} + \nu S(\max(p^+_n,p_{n+1}^+))\le
-\mu + \mu e^{P_0-C_0}e^{-nx_5} + \nu S(\max(p^+_n,p_{n+1}^+)) ,
\]
hence
\begin{equation}\label{jul712}
p^+_{n+1} \leq
-\mu + d_0e^{-nx_5} + \nu S(\max(p^+_n,p_{n+1}^+)) ,
\end{equation}
with $d_0=\mu\exp(P_0-C_0)$. Now, if $p_n^+<p_{n+1}^+$ then
\[
p^+_{n+1} \leq
-\mu + d_0e^{-nx_5} + \nu S(p_{n+1}^+) ,
\]
which, in turn, implies that
\[
p_n^+<p_{n+1}^+<x_6+c_0e^{-nx_5}
\]
with some constant $c_0$ that depends only on the behavior of the
function $S(x)$ near $x=x_6$,
and (\ref{jul710}) follows.

Now, let us prove (\ref{jul71}). Let us set
\[
L^+=\limsup_{n\to\infty}p_n^+.
\]
Passing to the limit in (\ref{jul712}) gives
\[
L^+\le -\mu +\nu S(L^+),
\]
hence  $L^+\le x_6$. Then (\ref{jul73}) and (\ref{jul78}) imply that
\[
L^-=\lim_{n\to\infty}p_n^-
\]
exists and $x_5<L^-\le x_6$. Moreover, (\ref{jul75}) implies that
\[
L^-\ge-\mu+\nu S(L^-),
\]
whence $L^-=x_6$. As   $L^+\le x_6$, and $L^+\ge L^-$, it follows
that $L^+= x_6$ as well, and, moreover,
\[
\liminf_{n\to\infty}p_n^+\ge\liminf_{n\to\infty}p_n^-=x_6,
\]
thus (\ref{jul71}) holds.
\end{proof}

While the mean reversion and bubble regimes in the deterministic
case continue indefinitely, the collapse regime ends once we reach
the equilibrium value (in fact, as will be clear from the argument
below, we should go well below the equilibrium value to break the decay). \\

\it The collapse regime. \rm Assume that both $P(t)-P(t-1) < x_3$ and $P(t) >
P_0$ for all $t \in [t_0-1,t_0].$ Then for any $t>t_0$ such that $P(s)$
remains larger than $P_0$ for all $t_0 \leq s \leq t,$ we have
$P(t)-P(t-1) < x_3.$ Moreover, if $P(t)>P_0$ for $t \in
[t_0-1,t_0+1],$ and \[
x_3> {\rm max}_{t \in [t_0-1,t_0]}(P(t)-P(t-1))
\equiv d_0
> x_2,
\]
then the decay accelerates in the next time interval:
\[
{\rm max}_{t \in [t_0,t_0+1]}(P(t)-P(t-1)) \equiv d_1 < d_0.
\]
 \\
\it Remark. \rm We lack a statement on the asymptotic rate of
collapse since any collapse ends once $P_0$ is reached. But if one
assumes that $P$ is very large in the beginning, one can prove that the
rate of decay approaches $x_1$ while $P$ remains very large. The
argument is similar to that given above in the bubble case.
\begin{proof}
Suppose there exists $t_1>t_0$ such that for $t \in [t_0,t_1)$, we have
$P(t)-P(t-1) < x_3$ and $P(t)>P_0,$ but $P(t_1)-P(t_1-1) = x_3.$
Then for any $t \in [t_1-1,t_1],$  we have
\[P'(t) \leq \nu S(P(t)-P(t-1)) <
\nu S(x_3) =x_3,\] giving us a contradiction.

Now let ${\rm max}_{t \in [t_0-1,t_0]}(P(t)-P(t-1))= d_0,$
with $x_2 <
d_0 <x_3.$ Then for every $t \in [t_0-1,t_0],$ we have $P'(t) < \nu
S(d_0) < d_0$ by the definition of $x_2$ and $x_3.$ This implies, by a
familiar argument, that $P(t)-P(t-1) < d_0$ for every $t \in
[t_0,t_0+1].$
\end{proof}
It is clear that $P$ should fall sufficiently below $P_0$ to balance
the speculative function. In particular, if the fall has been from
large values of $P,$ we need at least $f(P,P_0) \gtrsim x_2$ to
break the fall.

We should stress that it is clear from our conditions that mean
reversion, bubble and collapse do not exhaust the whole range of
possible behaviors in our model. For most parameter ranges where the
system does exhibit interesting behavior, there is a significant gap
between mean reversion and bubble. We call it a \it transitory
regime. \rm In the random case, one situation where it appears is when a
fluctuation kicks the price away from $P_0$ but not strongly enough
for the bubble to form (for example, the rate of growth does not exceed
$x_4 > \delta_m$ or is just barely above it).
In this regime, there
is no dominating term. The effects of both mean reversion and social
response may be significant. In the deterministic setting, what
happens depends strongly both on the rate of growth and the
value of $P$: for
example, if $P>P_0$ and ${\rm max}_{t \in [t_0-1,t_0]}(P(t)-P(t-1))
< x_4,$ the growth will decelerate. In the random case, the
evolution in the transitory regime becomes much more sensitive to
randomness. The transitory regime is thus most difficult for
rigorous analysis. In this paper, we will prove rigorous bounds on
probabilities reflecting stability or transitions between mean
reversion, bubble and collapse. The bounds we prove may be not quite
sharp due to the transitory effects we just described.

Observe that in the deterministic case, the three basic regimes
 are not connected: one cannot transition between them, with the
exception of the collapse regime which ends once the price has
fallen enough. Other transitions will be made possible by
randomness. There is a way, however, to ignite bubbles in the
deterministic case as well -- by adjusting the level of the
equilibrium value, $P_0.$  This can be interpreted as a change in
the fundamental data: a new positive earnings report in the case of a
stock or a stimulative interest rate policy for real estate or fixed
income investments. Since in practice, such change in $P_0$ often
happens in a jump, that is what we will consider here. It is not
difficult to adjust the arguments below to the case of a continuous
dependence $P_0(t).$

The first observation is that if the jump in $P_0$ is small, then
the bubble does not ignite. \\

\it Driving deterministic bubbles: a small jump. \rm Assume that $|P(t)-P_0(t)| <
\delta_m/2$ for $t\in [t_0-1,t_0],$ $P_0(t)=P_-.$ Assume also
that $P_0$
jumps up at $t_0$ to a value $P_+$ and the size of the jump satisfies
$0<P_+-P_-<\delta_m/2.$ Then we have $P(t)-P_0(t)<\delta_m/2$ and
$P(t)-P_0(t)>-\delta_m$ for $t \in [t_0, t_0+1],$ $P_0(t)=P_+.$ Thus
we continue in the mean reversion regime for all future times.\\
\it Remark. \rm We need a more relaxed lower bound above simply because
the original bound may clearly fail right after the jump.
\begin{proof}
Assume on the contrary that $P(t_1)=P_+ +\delta_m/2$ at some $t_1$ for
the first time. Then we have
\[
P'(t_1) = f(P_++\delta_m/2, P_+) + \nu
S(P(t_1)-P(t_1-1)),
\]
and, moreover,
\[
P(t_1)-P(t_1-1) \leq P(t_1)-P_+ + P_+ - P_- +
P_- - P(t_1-1) \leq \delta_m/2 + \delta_m/2 + \delta_m/2 =
3\delta_m/2.\]
Using Assumption I' with $x = \delta_m/2,$ gives
\[
f(P_++\delta_m/2,P_+) + \nu S(3\delta_m/2)<0,\]
and so $P'(t_1)<0$
which is a contradiction. Similarly, assume that
$P(t_2) = P_+ -\delta_m$ for some
$t_2>t_0$ for the first time. But then
\begin{eqnarray*}
&&P'(t_2) =f(P_+-\delta_m,P_+)+\nu S(P_+-\delta_m- P(t_2-1))\\
 &&\,\,\,\,\,\,\,\,\,\,\,\, \,\,\,\,\,\,\geq
f(P_+-\delta_m,P_+) + \nu S(P_+-\delta_m- P_- -\delta_m/2) >0
\end{eqnarray*}
by
Assumption I'.
\end{proof}

On the other hand, a strong driving will necessarily lead to bubble
ignition. \\

\it Driving deterministic bubbles: a large jump. \rm Assume that $|P(t)-P_0(t)| <
\delta_m/2$ for $t \in [t_0-1, t_0],$ $P_0(t)=P_-.$ Assume that
$P_0$ jumps up at $t_0$ to a value $P_+$ and the size of the jump satisfies
\[ P_+ -P_- \geq x_5+\frac{x_5}{\mu}+\delta_m.\]
 Then for all $t \in [t_0+1, t_0+2]$ we have
$P(t)-P(t-1)>x_5.$
\begin{proof}
First, observe that for all $t \in [t_0,t_0+1]$ we have $P(t)-P(t-1)
> -\delta_m.$ This is true since $P(t-1) \leq P_-+\delta_m/2$ by
assumption, and $P(t) \geq P_--\delta_m/2$ by the same argument as
in the proof of stability of mean reversion regime (using $P_+>P_-$).

Now we claim that $P'(t) > x_5,$ for every $t \in [t_0,t_0+1].$
Indeed, if $P(t) < P_-+x_5+\delta_m/2,$ then \[ P'(t) > f(P(t),P_+)
+ \nu S(-\delta_m) = \mu(e^{P_+-P(t)}-1) - \mu(e^{\delta_m/2}-1).
\] We used Assumption I' in the last step. Then
\[ P'(t) > \mu(P_+-P(t)-\delta_m/2) \geq x_5. \]
On the other hand, if $P(t) \geq P_-+x_5+\delta_m/2,$ then
\[ P'(t) > -\mu +\nu S(P(t)-P_--\delta_m/2) \geq -\mu +\nu
S(x_5) \geq x_5. \]

Now we can prove that $P(t)-P(t-1)>x_5$ for all $t \in [t_0+1,
t_0+2]$ by a familiar argument. Indeed, this is true for $t=t_0.$
Assuming existence of some minimal $t_1$ where we have
$P(t_1)-P(t_1-1)=x_5,$ we quickly obtain a contradiction.
\end{proof}

\section{The random case: heuristics}\label{her}

In this section, we will provide an intuitive picture of what to
expect from our model~\eqref{bubble} when the random term is included, but is small.
This regime may be more relevant to modeling illiquid investments like real estate
rather than liquid ones.
Consider first the mean
reversion regime. In this case we can regard the  influence of the
social response
term  as minor. It gets activated only once randomness, by
accident, moves the price away from the mean reversion zone. Note
that if we gradually drift away from $P_0$ -- which is difficult due
to mean reversion -- we still do not activate the bubble regime as the social
response term is small, but
move into the transitory regime. Rather, we should have a jump of size
$\sim a_b$ to enter the stable bubble regime. Let us set $P_n =
P(t_0+n),$ $\Delta P_n = P_n-P_{n-1},$ $\Delta B_n =B_n-B_{n-1}.$
Then heuristically, the evolution of $P_n$ in mean reversion regime
can be approximated by
\[ \Delta P_n =  \sigma \Delta B_n - \mu (P_n-P_0). \]
Intuitively, we pass into the stable bubble regime on the
$n$th step if $\sigma\Delta B_n \gtrsim (1+\mu)a_b.$ Thus the
probability $\lambda_1$ of bubble ignition at each step is roughly
$1-\Phi(\frac{(1+\mu)a_b}{\sigma}),$ where $\Phi$ is the cumulative
function of the normal distribution,
\[ \Phi(x) = \frac{1}{\sqrt{2\pi}}\int_{-\infty}^x e^{-y^2/2}\,dy.
\]
At this heuristic level, the dynamics in the mean reversion regime
can be thought of as a Poisson-type process where on each time step
we pass to the bubble regime with probability $\lambda_1,$ and stay
in the mean reversion regime with probability $1-\lambda_1.$ Of
course, in reality the probability of bubble ignition depends on the
past dynamics and on what exactly happens within the time interval.
The social response term also increasingly plays a role as the
system moves into the transitory regime which is more prone to creating
bubbles than mean reversion. However, the above picture captures the
most qualitative aspects of dynamics.

Once we are in the bubble regime, our rate of growth approaches
$x_6,$ and the bubble stability in terms of variations in the rate
of growth is measured by $\delta_b.$ The dynamics is qualitatively
described by \begin{equation}\label{bubbleher} \Delta P_n = -\mu +
\sigma \Delta B_n + \nu S(\Delta P_{n-1}).
\end{equation} Now, in order for the bubble to burst and for dynamics
to pass to the collapse regime, we need $\sigma \Delta B_n \gtrsim
\delta_b.$ Thus, the dynamics in the bubble regime can be modeled by
a Poisson-type process where on every time step we switch from bubble
to collapse
with the probability $\lambda_2 \sim 1-\Phi(\delta_b/\sigma)$ and we
stay in the bubble with the probability $1-\lambda_2.$ Observe the
similarity between this heuristic picture and the well known simple
discrete model proposed and investigated by Blanchard and Watson
\cite{BW1982} (minus the no-arbitrage condition, which is not automatic
in our model).

Finally, once we have entered the collapse regime, the dynamics in the
leading order is described again by \eqref{bubbleher} (for larger
values of $P,$ with the term $(-\mu)$ on the right side gradually
disappearing as $P$ decreases). For large $P,$ the only way to break
the collapse is to go back into the bubble regime, and for that one
needs $\sigma \Delta B_n \gtrsim a_b-a_c.$ Observe that the
probability of this is small compared to $\lambda_1$ and
$\lambda_2,$ due to $a_b
> \delta_b,$ $a_c < -\delta_c.$ As we approach the equilibrium value
$P_0,$ the condition for breaking the collapse gradually approaches
$\sigma \Delta B_n \gtrsim \delta_c,$ and the passage to the transitory
or mean reversion regime becomes possible. Once we have broken
through $P_0,$ the push back to the mean generated by the mean
reversion term increases
rapidly. We pass from collapse to the recovery when $P_0- P \sim
\log (1+\frac{\delta_c}{\mu}).$

We emphasize that the heuristic picture of this section cannot be
expected to provide a close approximation of the real evolution in
all possible relevant ranges of parameter values. For example, when
the random term is sufficiently strong, 
transitory effects play an important role, and dependence of the
evolution on details of the past dynamics increases notably. This
regime is perhaps most interesting from the practical point of view.
Still, in Section~\ref{num} devoted to numerical simulations, we
will see that our heuristic picture appears to describe the evolution pretty
well for a significant range of parameter values.

\section{The Random Case: Mean Reversion, Bubble and Collapse}\label{reg}

Now, we consider the full equation \eqref{bubble}:
\[
dP = f(P)dt + \sigma dB_t + \nu S(P(t)-P(t-1))dt.
\]
Naturally, depending on the relative strength of parameters, this
equation can exhibit very different behavior. For example, if both
$\nu$ and $\sigma$ are small, the mean reversion term will
always dominate and we will have a process very close to
Ornstein-Uhlenbeck process. Taking $\sigma$ large emphasizes randomness, and
then the nonlinear social response effects may be difficult to discern.
In this section we provide rigorous analysis which is most relevant in the case
of small to moderate random forcing.
First, we describe the probabilistic stability of
our basic regimes. For now, the equilibrium value  $P_0$ is fixed. \\

\it Stability of the mean reversion regime. \rm Assume that for all $t
\in [t_0-1, t_0]$ we have $|P(t)-P_0| < \delta_m,$ and $|P(t_0)-P_0|
< \delta_m/2.$ Then with probability at least
$2\Phi(C\delta_m/\sigma)-1,$ we have that the same is true for $t
\in [t_0,t_0+1]:$ $|P(t)-P_0|<
\delta_m$ for $t \in [t_0,t_0+1]$ and $|P(t_0+1)-P_0|<\delta_m/2.$
Here $C>0$ is a universal constant.
\\
\it Remark. \rm A stronger condition at the end of the interval is
necessary: if we had $|P(t_0)-P_0| = \delta_m,$ then we would have
exited the
interval $[P_0-\delta_m, P_0+\delta_m]$ at later times with probability one.
\begin{proof} Let us
assume that for any $t_1<t_2$, $t_1,t_2 \in [t_0,t_0+1]$ we have
\begin{equation}\label{bmcon1}
\sigma|B_{t_2}-B_{t_1}| < {\rm min}(c_m\mu,1/4)\delta_m,
\end{equation}
where $c_m$ is the same as in Assumption I'.
This is true
in particular if
\[
\sigma {\rm max}_{t_0 \leq t \leq t_0+1}
|B_t-B_{t_0}| < \frac12 {\rm min}(\mu c_m,1/4)\delta_m.
\]
The probability
of this event is equal to $2\Phi(\frac{C\delta_m}{\sigma})-1.$

We claim that
in this case, $|P(t)-P_0| \leq \delta_m$ for all $t \in [t_0,
t_0+1]$ and $|P(t_0+1)-P_0|<\delta_m/2.$
Let us verify that $P(t) \leq
P_0+\delta_m$ for all $t \in [t_0,t_0+1]$ and $P(t_0+1)<P_0+\delta_m/2.$
The other part of the condition can be verified similarly.
Indeed, assume on the contrary that there exists $t_3$ with
$P(t_3) = P_0 +\delta_m.$ Let $t_4$ be the largest time in
$[t_0,t_1]$ such that $P(t_4) = P_0 + \delta_m/2.$ Then we have
\[ P(t_3) - P(t_4) = \int\limits_{t_4}^{t_3}(f(P(s),P_0)+\nu
S(P(s)-P(s-1))) \,ds + \sigma(B_{t_3}-B_{t_4}). \] Note that due to
$P(s-1)>P_0-\delta_m$ and Assumption I', we have
\[
f(P(s),P_0)+\nu
S(P(s)-P(s-1)) < -c_m \mu \delta_m
\]
for any $s$ in the interval of
integration. Therefore we get
\[
P(t_3) - P(t_4) < -c_m \mu \delta_m
(t_3-t_4) + \delta_m/4 < \delta_m/4,
\]
a contradiction to our
choice of $t_3,t_4.$

Similarly, under condition \eqref{bmcon1} we have
$P(t_0+1)<P_0+\delta_m/2.$ Indeed, assume that $P(t_0+1) \geq
P_0+\delta_m/2.$ If we ever have $P(t)<P_0 + \delta_m/4$ for $t\in
[t_0,t_0+1],$ let $t_3$ be the largest time less than $t_0+1$ when
$P(t_3)=P_0+\delta_m/4.$ Then
\begin{eqnarray*}
&&P(t_0+1) - P(t_3) =
\int\limits_{t_3}^{t_0+1}(f(P(s),P_0)+\nu S(P(s)-P(s-1))) \,ds +
\sigma(B_{t_0+1}-B_{t_3}) \\
&&\,\,\,\,\,\,\,\,\,\,\,\,\,\,\,\,\,\,\,\,\,\,\,\,\,\,\,
\,\,\,\,\,\,\,\,\,\,\,\,\,\,\,\,\,\,\,\,
< -c_m\mu\delta_m(t_0+1-t_3) +
\frac{\delta_m}{4}<\frac{\delta_m}{4},
\end{eqnarray*}
which is a contradiction. We used Assumption
I' in estimating the integral. On the other hand, if $P(t) \geq
P_0+\delta_m/4$ for all $t \in [t_0,t_0+1]$ (but, as we saw above,
$P(t)<P_0+\delta_m$), then
\begin{eqnarray*}
 &&P(t_0+1)-P(t_0) = \int\limits_{t_0}^{t_0+1}(f(P(s),P_0)+\nu
S(P(s)-P(s-1))) \,ds + \sigma(B_{t_0+1}-B_{t_0}) \\
&&\,\,\,\,\,\,\,\,\,\,\,\,\,\,\,\,\,\,\,\,\,\,\,\,\,\,\,
\,\,\,\,\,\,\,\,< -c_m\mu\delta_m +
c_m\delta_m \mu=0,
\end{eqnarray*}
once again a contradiction.

Thus, under our condition on the maximum of Brownian motion, the mean
reversion regime is preserved.
\end{proof}


\it Stability of the bubble regime. \rm Assume that for $t \in
[t_0-1, t_0],$ we have $P(t)-P(t-1) >a_b,$ and $\sigma {\rm max}_{t
\in [t_0-1,t_0]}|B_t-B_{t_0-1}|< c_b\delta_b/4.$ Then with
probability at least $2\Phi(C\delta_b/\sigma)-1,$ the same is true
for $t \in [t_0,t_0+1]$ (but $t_0-1$ is replaced by $t_0$).
\begin{proof}
Suppose that for any $t_1<t_2,$ $t_1,t_2 \in [t_0,t_0+1]$ we have
$\sigma(B_{t_2}-B_{t_1}) > -c_b\delta_b/2.$ This is true with
probability ta least $2\Phi(C\delta_b/\sigma)-1.$ Let us show that
in this case,
\[
\hbox{$P(t)-P(t-1)>a_b$ for all $t \in [t_0,t_0+1].$}
\]
This
holds for $t_0$ by our assumption. Let $t_1$ be the first time when
$P(t_1)-P(t_1-1)=a_b.$ Then
\[ P(t_1)-P(t_1-1) \geq -\mu +\nu S(a_b) +\sigma (B_{t_1}-B_{t_1-1})>
a_b
\]
under the assumption on Brownian motion and \eqref{bubstab}. This is a
contradiction.
\end{proof}
\it Remark. \rm We need an extra assumption on the behavior of
Brownian motion during the past interval $[t_0-1,t_0]$ since there are some
unlikely scenarios where, due to Brownian motion, we essentially set up the exit from the bubble
regime during $[t_0-1,t_0]$ even though $P(t)-P(t-1)$ stays large for this time interval.

\it Stability of the Collapse regime. \rm Assume that
$P(t)-P(t-1)<a_c$ and $P(t)>P_0$ for $t \in [t_0-1,t_0].$ Assume
also that $\sigma {\rm max}_{t \in [t_0-1,t_0]}|(B_t-B_{t_0-1}| \geq
c_c \delta_c/4.$ Then   with probability at
least $2\Phi(C\delta_c/\sigma)-1$
either there exists a time $t \in [t_0, t_0+1]$ so that
$P(t)<P_0$ ,or $P(t)-P(t-1) < a_c$  for all $t \in [t_0, t_0+1]$.
\begin{proof}
Let us assume that for any $t_1<t_2,$ $t_1,t_2 \in [t_0,t_0+1]$ we
have $\sigma(B_{t_2}-B_{t_1}) < c_c \delta_c/2.$ This is true with
probability exceeding $2\Phi(C\delta_c/\sigma)-1.$ Assume, on the
contrary, that $P(t)>P_0$ for all $t \in [t_0,t_0+1],$ but there
exists $t_1$ such that $P(t_1)-P(t_1-1) = a_c.$ But then
\[
P(t_1)-P(t_1-1) \leq \nu S(a_c) + \sigma (B_{t_1}-B_{t_1-1}) \leq
a_c-c_c\delta_c +c_c\delta_c/2 < a_c,
\]
which is a contradiction.
\end{proof}
As in the previous case, it is easy to construct a scenario showing
that we need to assume something about the Brownian motion on the
interval $[t_0-1,t_0]$ in order to obtain a reasonable bound on the
probability of the continuation of the collapse regime.\\

Next, we look at a new phenomenon which appears in our model due to
randomness.

\it The bubble ignition probability. \rm Assume that for $t \in
[t_0-1, t_0],$ we have $|P(t)-P_0| < \delta_m,$ and $|P(t_0)-P_0| <
\delta_m/2.$ Then with a positive probability exceeding
\begin{equation}\label{igpro}
p_0=e^{-(a_b+2\mu+\delta_m+c_b\delta_b)^2/\sigma^2}
\left(2\Phi\left(
\frac{{\rm min}(\delta_m,c_b\delta_b)}{4\sigma}\right)-1\right)
\left( 2\Phi\left(\frac{c_b\delta_b}{8\sigma}\right)-1\right),
\end{equation}
we have $P(t)-P(t-1)>a_b$ for all $t \in [t_0+1, t_0+2].$ Here $c_b$
is as in \eqref{bubstab}. \\
\begin{proof}
We will identify just one scenario leading to bubble ignition, and
will estimate its probability. Assume that
\begin{equation}\label{linig}
|\sigma(B_t - B_{t_0}) - (a_b+2\mu+\delta_m+c_b\delta_b)(t-t_0)|
\leq \frac{{\rm min}(\delta_m, c_b\delta_b)}{4},
\end{equation}
for all $t \in [t_0,t_0+1]$ (one sample path satisfying this bound
appears on Figure~\ref{bc3}).

\begin{figure}
\begin{center}
\scalebox{0.75}{\includegraphics{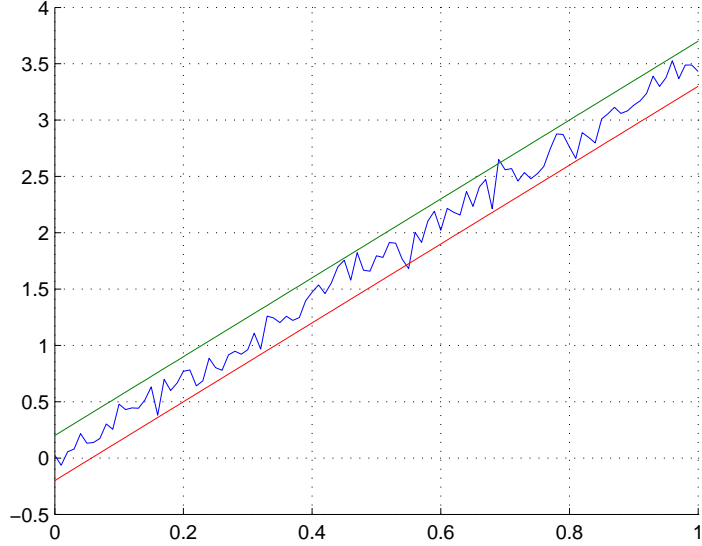}} \caption{One possible
behavior of the Brownian motion leading to the bubble ignition}
\label{bc3}
\end{center}
\end{figure}

We will estimate the probability of \eqref{linig} in
Lemma~\ref{girlemma} below. First, we claim that under assumption
\eqref{linig}, we have $P(t)-P(t-1)>-2\delta_m$ for all $t \in
[t_0,t_0+1].$ This follows from $P(t-1) \leq P_0 +\delta_m$ due to
our assumptions, and from $P(t) \geq P_0-\delta_m.$ The latter
inequality can be established, given \eqref{linig}, similarly to the
proof of the stability of random mean reversion regime.

Secondly, we claim that $P(t_0+1)-P(t_0)>a_b.$ Indeed,
\[ P(t_0+1)-P(t_0) >  - \mu + \sigma(B_{t_0+1}-B_{t_0}) +\nu
S(-2\delta_m).\] Observe that due to Assumption I', $\nu
S(-2\delta_m)=-\nu S(2\delta_m) > -\mu.$ Then using \eqref{linig}, we
obtain
\[ P(t_0+1)-P(t_0) > -2\mu
+(a_b+2\mu+\delta_m+c_b\delta_b)-\frac{{\rm
min}(\delta_m,c_b\delta_b)}{2} > a_b. \]

Finally, assume that $P(t_1)-P(t_1-1)=a_b$ for the first time at
some $t_1 \in [t_0+1,t_0+2].$ Let us split
\[
P(t_1)-P(t_1-1) = P(t_1)-P(t_0+1)+P(t_0+1)-P(t_1-1).
\]
Then
\begin{eqnarray*}
P(t_1)-P(t_0+1) \geq -\mu(t_1-t_0-1)+\sigma
(B_{t_1}-B_{t_0+1}) +\nu S(a_b)(t_1-t_0-1) \\  >
(a_b+c_b\delta_b)(t_1-t_0-1)+\sigma (B_{t_1}-B_{t_0+1}).
\end{eqnarray*} We used \eqref{bubstab} in the second step. Assume that
for any $t_2<t_3,$ $t_2,t_3 \in [t_0+1,t_0+2]$ we have
$\sigma(B_{t_3}-B_{t_2})>-c_b\delta_b/4.$ This happens with
probability exceeding
$2\Phi\left(\frac{c_b\delta_b}{8\sigma}\right)-1.$ Then
\begin{equation}\label{aux11} P(t_1)-P(t_0+1)
\geq (a_b+c_b\delta_b)(t_1-t_0-1)-c_b\delta_b/4.
\end{equation}
Next,
\begin{eqnarray} \nonumber P(t_0+1)-P(t_1-1) \geq -\mu(t_0+2-t_1) +
\sigma(B_{t_0+1}-B_{t_1-1})+ \nu \int\limits_{t_1-1}^{t_0+1}
S(P(t)-P(t-1))\,dt \\ \nonumber
 \geq -2\mu(t_0+2-t_1)+
(2\mu+a_b+\delta_m+c_b\delta_b)(t_0+2-t_1)-
\frac{{\rm min}(\delta_m,c_b\delta_b)}{2} \geq \\
 \label{aux12}
 \geq (a_b+\delta_m+c_b\delta_b)(t_0+2-t_1)-
\frac{{\rm min}(\delta_m,c_b\delta_b)}{2}.
\end{eqnarray}
Adding up \eqref{aux11} and \eqref{aux12}, we obtain
\[
P(t_1)-P(t_1-1) > a_b+c_b\delta_b/4 >a_b,
\]
and this is a contradiction.

It remains to estimate the probability of \eqref{linig}.
\begin{lemma}\label{girlemma}
Let $X_t= B_t -ct.$ Then \begin{equation}\label{lineprob} P({\rm
max}_{0 \leq s \leq t}|X_s| \leq b) \geq e^{-cb-\frac{c^2
t}{2}}(2\Phi(b/\sqrt{t})-1).
\end{equation}
\end{lemma}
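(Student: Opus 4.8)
The natural tool here is Girsanov's theorem: the drifted process $X_t = B_t - ct$ is a Brownian motion under a change of measure, and the event $\{\max_{0\le s\le t}|X_s|\le b\}$ is easy to control under that new measure because it becomes the event that an undrifted Brownian motion stays in $[-b,b]$. So the plan is to write
\[
P\Big(\max_{0\le s\le t}|X_s|\le b\Big)
= \E\Big[\mathbf 1_{\{\max_{0\le s\le t}|B_s - cs|\le b\}}\Big],
\]
and pass to the measure $\widetilde{P}$ under which $\widetilde B_s := B_s - cs$ is a standard Brownian motion. The Radon--Nikodym derivative restricted to $\mathcal F_t$ is $\frac{dP}{d\widetilde P}\big|_{\mathcal F_t} = \exp\big(c\,\widetilde B_t - \tfrac{c^2 t}{2}\big)$ (equivalently $\frac{d\widetilde P}{dP}\big|_{\mathcal F_t} = \exp\big(-c B_t - \tfrac{c^2 t}{2}\big)$; I will pick whichever sign convention makes the bookkeeping cleanest). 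Thus
\[
P\Big(\max_{0\le s\le t}|X_s|\le b\Big)
= \widetilde{\E}\Big[\exp\Big(c\widetilde B_t - \tfrac{c^2 t}{2}\Big)\,\mathbf 1_{\{\max_{0\le s\le t}|\widetilde B_s|\le b\}}\Big].
\]

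The second step is to bound the integrand from below on the event in question. On $\{\max_{0\le s\le t}|\widetilde B_s|\le b\}$ we have in particular $\widetilde B_t \ge -b$, so $\exp\big(c\widetilde B_t - \tfrac{c^2 t}{2}\big) \ge \exp\big(-cb - \tfrac{c^2 t}{2}\big)$ (using $c\ge 0$; if one wants the statement for all real $c$ it still holds after replacing $c$ by $|c|$ on the right, and the stated inequality is what is used). Pulling this constant out gives
\[
P\Big(\max_{0\le s\le t}|X_s|\le b\Big)
\ge e^{-cb - c^2 t/2}\,\widetilde P\Big(\max_{0\le s\le t}|\widetilde B_s|\le b\Big).
\]
The final step is a crude lower bound on the probability that a standard Brownian motion stays in $[-b,b]$ on $[0,t]$: simply drop the lower reflecting barrier, i.e.
\[
\widetilde P\Big(\max_{0\le s\le t}|\widetilde B_s|\le b\Big)
\ge \widetilde P\Big(\max_{0\le s\le t}\widetilde B_s \le b\ \text{ and }\ \min_{0\le s\le t}\widetilde B_s \ge -b\Big),
\]
and then use the reflection principle for the one-sided maximum together with symmetry. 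Concretely, $\widetilde P(\max_{0\le s\le t}|\widetilde B_s| < b) \ge 2\widetilde P(0\le \widetilde B_t < b) = 2\Phi(b/\sqrt t) - 1$; this follows because $\{0\le \widetilde B_t\}\setminus\{\max |\widetilde B_s|\ge b\}$ and its mirror image cover $\{\max_{0\le s\le t}|\widetilde B_s| < b\}$, or alternatively from the standard series formula for the exit time of an interval, keeping only the leading term. Combining the three displays yields \eqref{lineprob}.

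The only place requiring genuine care is the justification of the Girsanov change of measure — checking that Novikov's condition (or the boundedness of the drift $c$) applies so that $\exp(c B_t - c^2 t/2)$ is a true martingale on $[0,t]$, which is immediate since $c$ is a constant. The rest is routine. The one modeling point to watch is the sign/normalization of $c$: the lemma is stated with $e^{-cb}$, which is the correct (and only interesting) bound when $c > 0$, and in the application $c = (a_b+2\mu+\delta_m+c_b\delta_b)/\sigma > 0$ with $t = 1$, so no generality is lost. I would also note in passing that the inequality is far from sharp — the discarded barrier and the crude bound on $\widetilde B_t$ both cost constant factors — but it is exactly in the form needed to feed into the estimate \eqref{igpro} for the ignition probability $p_0$.
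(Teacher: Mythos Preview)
Your Girsanov approach is exactly the paper's. There is a sign slip in your Radon--Nikodym derivative: with $d\widetilde P/dP=\exp(cB_t-\tfrac{c^2t}{2})$ (which is what makes $\widetilde B_t=B_t-ct$ a $\widetilde P$-Brownian motion) one obtains $dP/d\widetilde P=\exp(-c\widetilde B_t-\tfrac{c^2t}{2})$, not $\exp(+c\widetilde B_t-\tfrac{c^2t}{2})$; accordingly the bound on the event should use $\widetilde B_t\le b$ (the paper writes this as $B_t\le ct+b$), not $\widetilde B_t\ge -b$. Your two sign errors happen to cancel and the factor $e^{-cb-c^2t/2}$ comes out right, but the bookkeeping should be fixed.

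The substantive gap is the last step. The inequality $\widetilde P(\max_{0\le s\le t}|\widetilde B_s|\le b)\ge 2\Phi(b/\sqrt t)-1$ is false: since $\{\max_s|\widetilde B_s|\le b\}\subset\{|\widetilde B_t|\le b\}$, one has $\widetilde P(\max_s|\widetilde B_s|\le b)\le 2\Phi(b/\sqrt t)-1$, strictly, for every $b>0$. Your covering argument actually yields $\widetilde P(\max|\widetilde B_s|<b)=2\,\widetilde P(0\le\widetilde B_t,\ \max|\widetilde B_s|<b)\le 2\,\widetilde P(0\le\widetilde B_t<b)$, the reverse direction; and in the alternating image series the leading term is an \emph{upper} bound, not a lower one. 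The paper simply asserts the same inequality without justification, so this is a defect in the lemma as stated rather than only in your write-up. A correct replacement, adequate for the downstream ignition estimate, is the union-bound/reflection estimate $\widetilde P(\max_s|\widetilde B_s|\le b)\ge 1-2\,\widetilde P(\max_s\widetilde B_s>b)=4\Phi(b/\sqrt t)-3$, which is positive once $b/\sqrt t>\Phi^{-1}(3/4)$.
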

\begin{proof}
The proof is a simple application of the Girsanov theorem. Let $A$ be
the set of paths where ${\rm max}_{0 \leq s \leq t} |X_s| \leq b.$
Let $Z_t = e^{cB_t - \frac{c^2 t}{2}},$ and define a new measure
$dQ_t = Z_t dP_t.$ Then by Girsanov's theorem, $X_t$ is a Brownian
motion with respect to $dQ_t$, hence $Q(A) \geq
2\Phi(b/\sqrt{t})-1.$  Denote by $\chi_A$ the characteristic function
of this set, and by $\E_Q$ the expectation taken with respect to the
measure $dQ.$ Then
\[ P(A) = \E_Q [e^{-cB_t+\frac{c^2 t}{2}}\chi_A] \geq
\E_Q [e^{-c(ct+b)+\frac{c^2 t}{2}}\chi_A]= e^{-cb-\frac{c^2
t}{2}}Q(A). \] We used that on $A,$ $B_t \leq ct+b.$ Thus,  \[ P(A)
\geq e^{-cb-\frac{c^2 t}{2}}\left(2\Phi(b/\sqrt{t})-1\right).
\]
\end{proof}
It follows from Lemma~\ref{girlemma} that the probability of
\eqref{linig} satisfies
\begin{eqnarray*}
 P \left( {\rm max}_{0 \leq s\leq 1} \left|B_s -
\frac{a_b+2\mu+\delta_m+c_b\delta_b}{\sigma}s \right| \leq
\frac{{\rm min}(\delta_m,c_b\delta_b)}{4\sigma} \right) \geq \\
e^{-(a_b+2\mu+\delta_m+c_b\delta_b)^2/\sigma^2}\left(2\Phi\left(\frac{{\rm
min}(\delta_m,c_b\delta_b)}{4\sigma}\right)-1\right)
\end{eqnarray*}
(observe that the $cb$ term in the exponent in \eqref{lineprob} is
dominated by $c^2t/2$ in our case). Multiplying it by the
probability of an independent Brownian motion behavior we required
on $[t_0+1,t_0+2]$ leads to \eqref{igpro}.
\end{proof}

Next, we look at the probability of the bubble regime switching to
a collapse.

\it The bubble collapse probability. \rm Assume that we are in a
Bubble regime: for $t \in [t_0-1,t_0],$ $P(t)-P(t-1)>a_b$ and $P(t)
> P_0.$ Then with probability exceeding
\begin{equation}\label{collpro}
e^{-(\nu-a_c+c_c\delta_c)^2/\sigma^2}\left(
2\Phi\left(\frac{c_c\delta_c}{8\sigma}\right)-1\right)
\left(2\Phi\left(\frac{{\rm
min}(\delta_m,c_b\delta_b)}{4\sigma}\right)-1\right)
\end{equation}
 we have either $P(t) \leq
P_0$ for some $t \in [t_0, t_0+2],$ or $P(t)-P(t-1) < a_c$ for all
$t \in [t_0+1, t_0+2].$
\begin{proof}
The argument is similar to the bubble ignition case, and we leave it to the
interested reader to complete it.
\end{proof}

One of the debatable features of the bubble equation is that the
bubble to collapse transition probability does not appear to
increase with growth of $P.$ In fact, there is a slight increase due
to decay in $f(P,P_0),$ but it is likely that the effect is
insignificant when $\nu \gg\mu.$ In the discussion section we
propose a slightly more complex model that is much more sensitive to
the level of $P.$ The drawback is the presence of an additional
parameter controlling the negative feedback response to growth in
price. Given how hard this parameter is to estimate in practice, we
believe there is some merit in keeping things simple and elegant
as in the bubble equation, where the timing of collapse is purely
random.

\section{The Random Case: Forcing a Bubble}\label{forc}

In this section, we consider the random case with a changing $P_0.$

\it Driving random bubbles: a small jump. \rm Assume that for all
$t \in [t_0-1, t_0],$ we have $|P(t)-P_0(t)| < \delta_m,$
$|P(t_0)-P_0(t)|<\delta_m/2$ where $P_0(t)=P_-.$ Assume that at
$t_0$ the stable value $P_0$ undergoes a jump up to $P_+,$ with
$P_+-P_-<\delta_m/4$, and $P_0(t)$ stays equal to $P_+$ after that. Then
with probability at least $2\Phi(C\delta_m/\sigma)-1,$ we have
$|P(t)-P_+|<\delta_m$ for $t \in [t_0,t_0+1],$ and
$|P(t_0+1)-P_+|<\delta_m/2.$
\begin{proof}
Let us estimate the probability of $P(t) \geq P_+-\delta_m$ for $t
\in [t_0,t_0+1]$ and $P(t_0+1) \geq P_+-\delta_m/2.$ The other half
of the statement is similar and is in fact a bit easier since there
is more danger in violating the lower boundary due to the increase
in $P_0.$ Similarly to the stability argument without change in
$P_0,$ assume that $\sigma|B_{t_3}-B_{t_2}| < {\rm
min}(c_m\mu,1/4)\delta_m$ for any $t_3>t_2,$ $t_2,t_3 \in
[t_0,t_0+1].$ The probability of this exceeds
$2\Phi\left(\frac{\delta_m}{8\sigma}\right)-1.$ Assume that there
exists $t_1$ such that $P(t_1)-P_+ = -\delta_m,$ and look at the
minimal such $t_1.$ Since $P(t_0)-P_+ > -3\delta_m/4,$ find the
largest $t_4 \in [t_0,t_1]$ such that $P(t_4)-P_+ = -3\delta_m/4.$
Consider
\begin{equation}\label{aux3} P(t_1)-P(t_4) =
\int\limits_{t_4}^{t_1} f(P(t),P_+)\,dt
+ \nu \int\limits_{t_4}^{t_1} S(P(t)-P(t-1))\,dt +
\sigma(B_{t_1}-B_{t_4}). \end{equation} Observe that \[ P(t)-P(t-1)
\geq P(t)-P_++P_+-P_--\delta_m/2 \geq P(t)-P_+ - 3\delta_m/4. \]
Then by Assumption I', we have
\[ f(P(t),P_+) + \nu S(P(t)-P(t-1)) \geq c_m\mu\delta_m \]
in \eqref{aux3}. Given our Brownian motion assumption, we obtain
\[ P(t_1)-P(t_4) \geq -\delta_m/4 + c_m\mu\delta_m(t_1-t_4)>-\delta_m/4, \]
a contradiction.

Next, assume that $P(t_0+1)<P_+-\delta_m/2.$ If we ever have
$P(t)>P_+-\delta_m/4$ in $[t_0,t_0+1],$ let $t_5$ be the largest
time where we have $P(t_5)=P_+-\delta_m/4.$ Then \begin{eqnarray*}
P(t_0+1)-P(t_5) =
\int_{t_5}^{t_0+1}(f(P(s),P_+) +\nu S(P(s)-P(s-1)))\,ds +\\
\sigma(B_{t_0+1}-B_{t_5}) \geq
c_m\mu\delta_m(t_0+1-t_5)-\delta_m/4,\end{eqnarray*} giving a
contradiction. We used Assumption I' in the second step. In the case
if $P(t)<P_+-\delta_m/4$ (but we saw $P(t)>P_+-\delta_m$) for all $t
\in [t_0,t_0+1],$ we have
\[ P(t_0+1)-P(t_0) \geq c_m\mu\delta_m - c_m\mu\delta_m >0, \]
contradiction.
\end{proof}

Now we consider the probability of bubble ignition given a
strong jump in $P_0.$ \\
\it Driving random bubbles: a large jump. \rm Assume that for all times
$t \in
[t_0-1,t_0]$ we have $|P(t)-P_0(t)|<\delta_m,$ and
$|P(t_0)-P_0(t_0)|<\delta_m/2,$ with $P_0(t) = P_-.$ Assume that at
$t=t_0$ $P_0$ jumps to a value $P_+,$ and the size of jump satisfies
\begin{equation}\label{bjump}
P_+-P_- \geq \frac{2a_b+3\delta_m}{{\rm min}(\mu,1)}.
\end{equation}
Then with probability exceeding $2\Phi(C{\rm min}(\delta_m,
c_b\delta_b)/\sigma)-1,$ we have $P(t)-P(t-1)>a_b$ for all $t \in
[t_0+1,t_0+2].$
\begin{proof}
The main step in the proof is the following claim. Suppose, in
addition to the assumptions above, that for any $t_2,t_3 \in
[t_0,t_0+1],$ $t_3>t_2,$ we have
\begin{equation}\label{bmass14}
\sigma(B_{t_3}-B_{t_2}) > -{\rm min}(\delta_m, c_b\delta_b)/2.
\end{equation}
Then
\begin{equation}\label{keybigj}
P(t_3)-P(t_2) > (a_b +{\rm min}(\delta_m,c_b\delta_b))(t_3-t_2)-{\rm
min}(\delta_m,c_b\delta_b)/2.
\end{equation}
To prove \eqref{keybigj}, first observe that under our assumptions
$P(t)-P(t-1)> -2\delta_m$ for all $t \in [t_0,t_0+1].$ This can be
verified using \eqref{bmass14} similarly to the proof of stability
in random mean reversion regime. Now
\begin{equation}\label{keybigj1}
P(t_3)-P(t_2) = \int_{t_2}^{t_3}\left(f(P(t),P_+)+\nu
S(P(t)-P(t-1))\right)\,dt + \sigma(B_{t_3}-B_{t_2}).
\end{equation}
If, on the other hand,
$P(t)-P_->a_b+\delta_m,$ then $P(t)-P(t-1)>a_b$ and therefore
$\nu S(P(t)-P(t-1)) > a_b +c_b\delta_b +\mu.$ Since
$f(P(t),P_+)>-\mu$ for all values of $P(t),$ we find that the
expression under integral in \eqref{keybigj1} is greater then
$a_b+c_b\delta_b.$ Now if $P(t)-P_- \leq a_b+\delta_m,$ then by
\eqref{bjump} we have
\[ P_+-P(t) > \frac{a_b+2\delta_m}{{\rm min}(\mu,1)}. \]
In this case the expression under integral in \eqref{keybigj1}
exceeds
\[ \mu\left(e^{\frac{a_b+2\delta_m}{{\rm min}(\mu,1)}}-1 \right)+\nu
S(-2\delta_m) \geq \mu \left(e^{\frac{a_b+2\delta_m}{{\rm
min}(\mu,1)}}-e^{\delta_m} \right) \geq \mu \frac{a_b+\delta_m}{{\rm
min(\mu,1)}}>a_b+\delta_m. \] Here in the first step we used $\nu
S(-2\delta_m)>\mu(1-e^{\delta_m}),$ which follows from Assumption
I'. Combining these estimates and assumption \eqref{bmass14} leads
to \eqref{keybigj}.

Next, suppose that the Brownian motion on the interval $[t_0+1,t_0+2]$
satisfies
\begin{equation}\label{bmass15}
\sigma(B_{t_3}-B_{t_2}) > -c_b\delta_b/2
\end{equation}
for any $t_3,t_2 \in [t_0+1,t_0+2],$ $t_3>t_2.$ Observe that
\eqref{keybigj} implies that $P(t_0+1)-P(t_0)>a_b.$ Let us show that
the inequality persists for all $t \in [t_0+1,t_0+2]$ given
\eqref{bmass15}. Assume, on the contrary, that $t_1 \in
[t_0+1,t_0+2]$ is the minimal time such that $P(t_1)-P(t_1-1)=a_b.$
Split $P(t_1)-P(t_1-1) = P(t_1)-P(t_0+1)+P(t_0+1)-P(t_1-1).$ Then
\[ P(t_1)-P(t_0+1) > (\nu S(a_b) -
\mu)(t_1-t_0-1)+\sigma(B_{t_1}-B_{t_0+1}) >
(a_b+c_b\delta_b)(t_1-t_0-1) -c_b\delta_b/2. \] On the other hand,
\eqref{keybigj} implies that
\[ P(t_0+1)-P(t_1-1) > (a_b+{\rm
min}(\delta_m,c_b\delta_b))(t_0+2-t_1)-{\rm
min}(\delta_m,c_b\delta_b)/2. \] Combining these two estimates, we
find that $P(t_1)-P(t_1-1)>a_b$ - contradiction.

It remains to observe that the probability of both \eqref{bmass14}
and \eqref{bmass15} happening can be estimated from below by
$2\Phi\left(\frac{{\rm min}(\delta_m,
c_b\delta_b)}{2\sqrt{2}\sigma}\right)-1.$
\end{proof}

\section{Numerical Simulations}\label{num}

We have tested the bubble equation dynamics on a wide range of
parameter values. The Matlab random number generator was
used for simulations.  The model seems to be quite robust to the
type of random forcing: mean zero uniformly distributed and normally
distributed random terms with comparable variances lead to very
similar results. This is again to be expected from the analytical
point of view, since the properties of the Brownian motion used in
the proofs are fairly generic and shared by many other random
processes. The simulations were performed with the nonlinear
response term $S(x) = \arctan(dx^{2n+1}),$ with $n=1,2$ or $3$ and
$d$ a varying parameter. The
roots $x_i,$ $i=1,...6$ were computed for each parameter set tried.
When clear bubble and collapse regimes were observable, the rates of
growth and decay showed very good agreement with the ones predicted
by values of the relevant roots.

As one can expect, the behavior of the model depends strongly on the
balance between different parameters. First of all, if we want to observe
bubbles, we should set $\nu \gtrsim \mu$. We found three
distinct regimes for the Bubble equation. \\

\begin{figure}[!ht]
\begin{center}
\scalebox{0.85}{\includegraphics{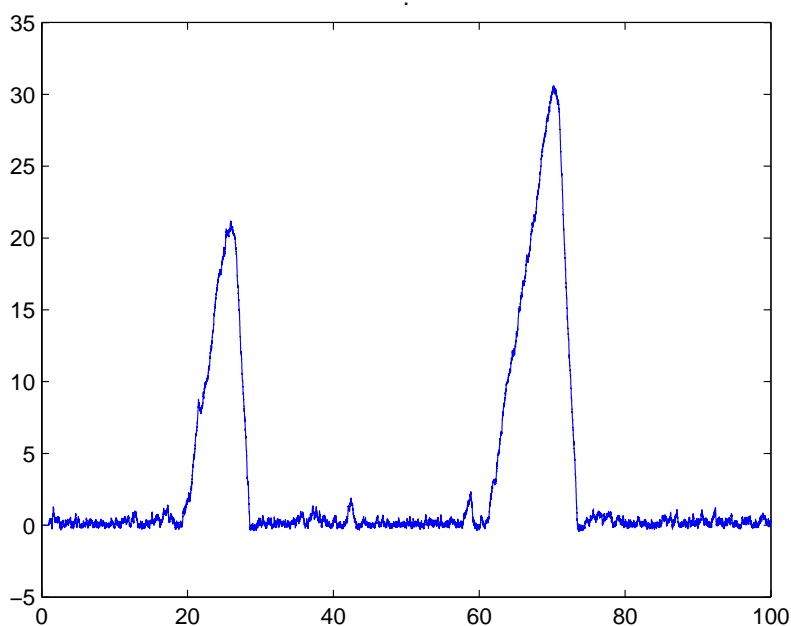}} \caption{Big
rare bubbles phase: $\nu \gtrsim \mu \gtrsim \sigma^2$}
\label{hrbub}
\end{center}
\end{figure}

\begin{figure}[!ht]
\begin{center}
\scalebox{0.85}{\includegraphics{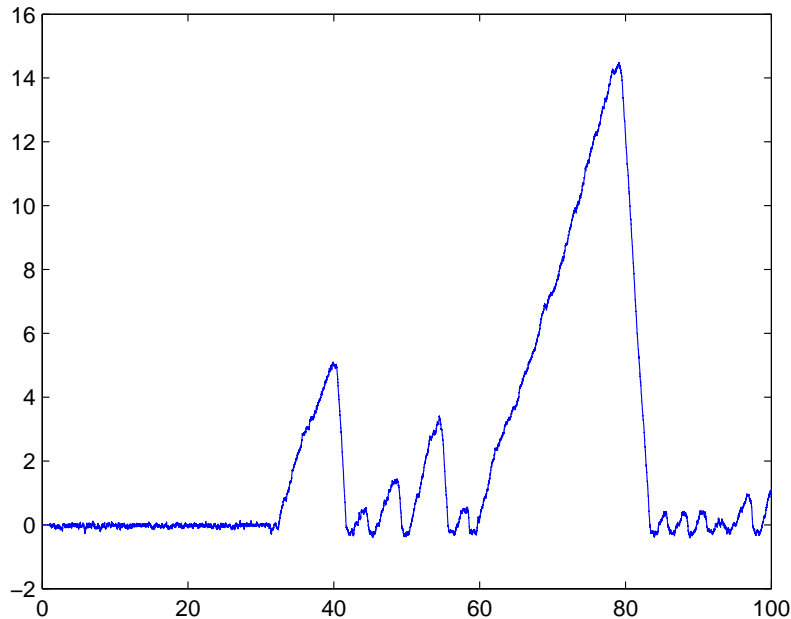}} \caption{Big
rare bubbles phase: $\nu \gtrsim \mu \gtrsim \sigma^2$: Serial
bubbles.} \label{unstbub}
\end{center}
\end{figure}

{\bf Regime 1: $\nu \gtrsim \mu \gtrsim \sigma^2$}, or big rare bubbles regime.
The three regimes described in analytical part of the paper are very
clear cut here. Transitions are relatively rare. The heuristic
picture of Section~\ref{her} seems to apply very well. A typical
simulation is shown on Figure~\ref{hrbub}. In this regime, we noted
that the likelihood of serial bubbles occurence, passing from
complete collapse right to the next bubble, may be significant. This
happens if the price compression below $P_0$ (which according
to the heuristics is of the order $\sim \log
\left(1+\frac{\delta_c}{\mu}\right)$) is comparable to $x_5.$ In that
case the recovery to $P_0$ level provides enough of a trend to help
ignite the next bubble (see Figure~\ref{unstbub}). 
The
run on Figure~\ref{hrbub} corresponds to $\nu = 5,$ $\mu = 4,$
$\sigma =3,$ $S$ cubic with $d=0.4.$ This corresponds to $x_1 \sim
-12,$ $x_2 \sim -7.5,$ $x_3 \sim -0.5,$ $x_4 \sim 0.5,$ $x_5 \sim
2,$ $x_6 \sim 3,$ and $\delta_m \sim 1.$ It is clear from these
values that the stability of the mean reversion and bubble regimes
is actually even stronger than one can expect from our analytical
bounds. In fact, taking smaller $\sigma$ usually led to very long
mean reversion runs, followed by eventual bubbles which grew huge
and exceeded our computer capacity before collapsing.

The fact that the dynamics in big rare bubbles regime is described
quite well by heuristics of Section~\ref{her} makes it also close to
the well known bubble model of Blanchard and Watson \cite{BW1982}.
It would be of interest to show that under certain scaling
assumptions, the dynamics of the bubble equation converges in the
rigorous sense to a Poisson-type process with just three regimes of
constant, exponential growth and exponential collapse. \\

\begin{figure}[!ht]
\begin{center}
\scalebox{0.85}{\includegraphics{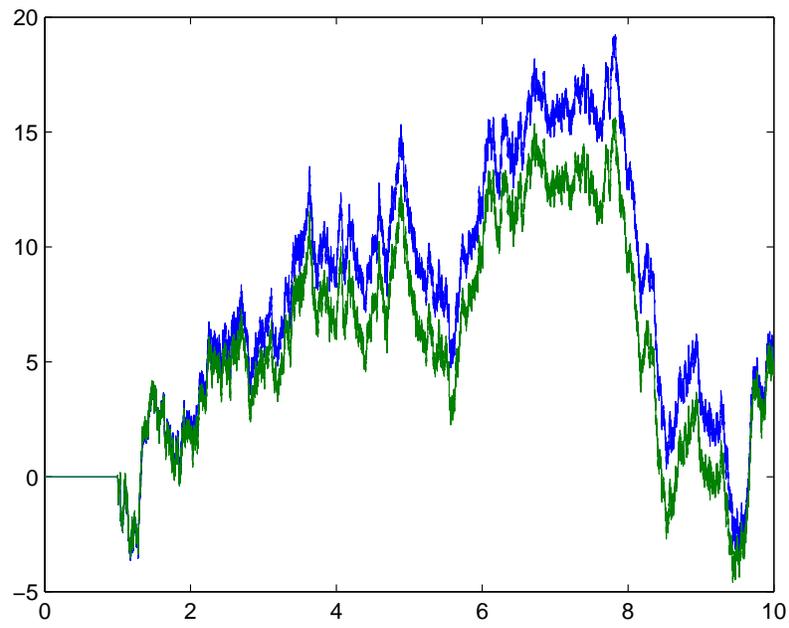}}
\caption{Strong randomness phase: $\sigma^2 \gtrsim \nu \gtrsim
\mu$} \label{randph}
\end{center}
\end{figure}
\clearpage

{\bf Regime 2: $\sigma^2 \gtrsim \nu \gtrsim \mu$}, or strong randomness regime.
In this phase the effects of the social response term are weak
compared to random fluctuations. The mean reversion, bubble and
collapse regimes are difficult to isolate, and the dynamics is far
from the heuristic model of Section~\ref{her}. A sample simulation
of this regime is shown on Figure~\ref{randph}. One graph is the
price evolution given by the bubble equation, and the other,
provided for comparison, is the dynamics corresponding to $\nu =0$
(and so it is just an Ornstein-Uhlenbeck process).

The bubble equation price tends to overshoot Ornstein-Uhlenbeck, but
fairly slightly, and there is no sustained bubble regime. The
parameters are $\nu = 0.6,$ $\mu = 0.2,$ $\sigma = 20,$ the function
$S$ is quintic with $d=90.$ \\

\begin{figure}[!ht]
\begin{center}
\scalebox{0.85}{\includegraphics{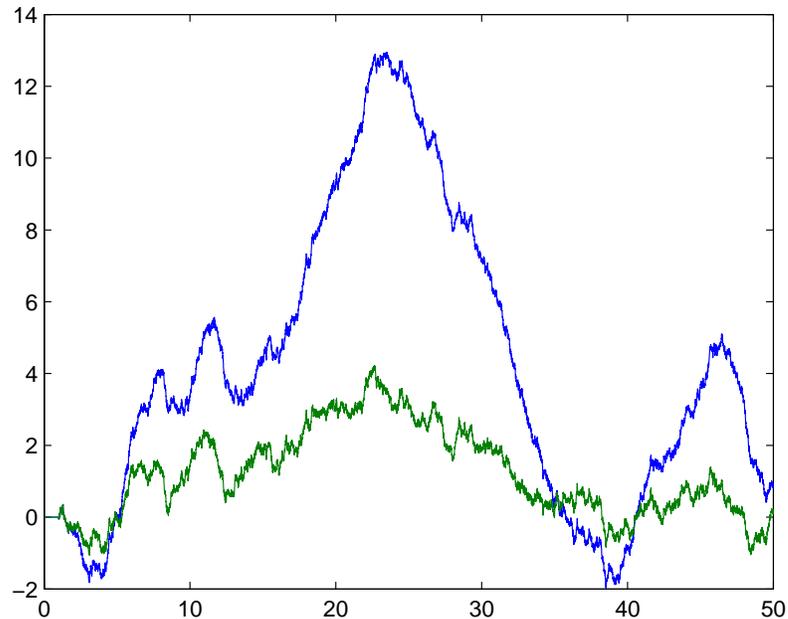}} \caption{The
balanced phase: $\sigma \sim \nu \gtrsim \mu$} \label{bal1}
\end{center}
\end{figure}

\begin{figure}[!ht]
\begin{center}
\scalebox{0.85}{\includegraphics{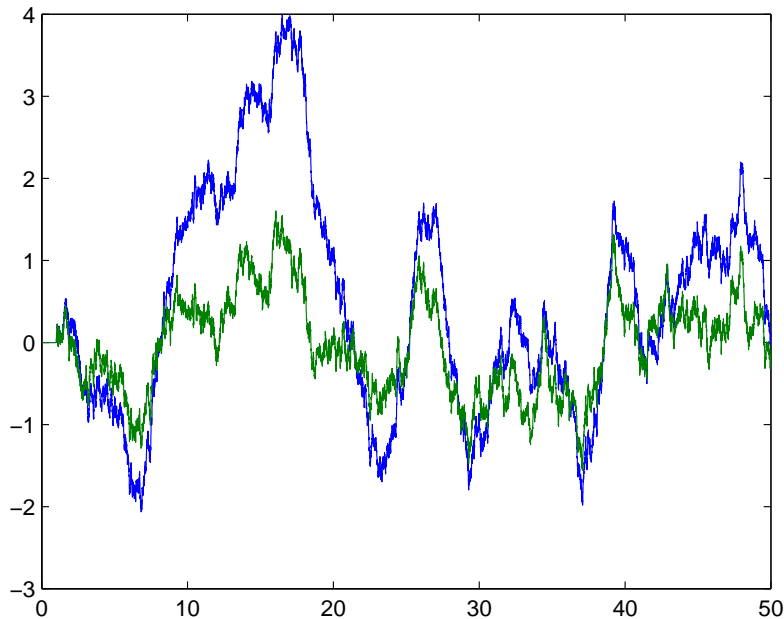}}
\caption{The balanced phase: stronger randomness} \label{bal2}
\end{center}
\end{figure}

\begin{figure}[!ht]
\begin{center}
\scalebox{0.85}{\includegraphics{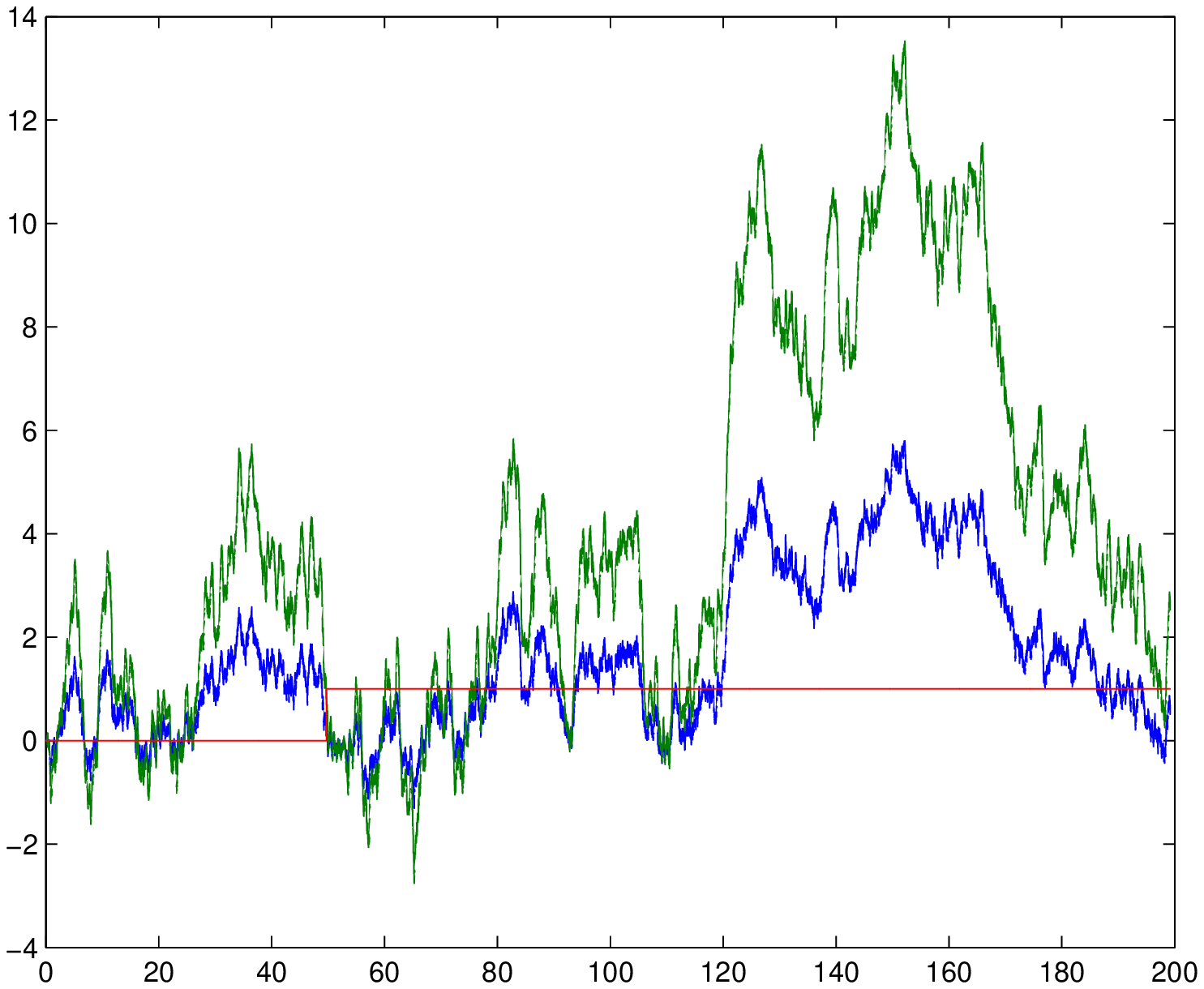}} \caption{A small jump in fundamentals:
no bubble ignition} \label{noj1}
\end{center}
\end{figure}

{\bf Regime 3.} The most interesting phase appears to be when $\sigma$ is
slightly larger or comparable to $\nu$, and $\mu$ is also comparable
or weaker. It appears that this is the
parameter range that is most likely to be relevant in most
applications with liquid instruments. Here one can still sometimes distinguish the different regimes
(although mean reversion is very unstable if $\mu$ is small). Yet
transitory effects are stronger than in big rare bubbles phase.
Figures~\ref{bal1} and \ref{bal2} show some typical runs in this
regime. As before, the Ornstein-Uhlenbeck process corresponding to the same
random sample but with $\nu=0$ is graphed for comparison.

Figure~\ref{bal1} corresponds to $\nu = 0.6,$ $\mu = 0.23$ and
$\sigma =2.$ The function $S$ is quintic, with $d=90.$ The
Figure~\ref{bal2} corresponds to $\nu =0.42,$ $\mu =0.15,$ $\sigma
=2,$ $S$ is cubic with $d=90.$

Figure~\ref{bal2} corresponds to relatively stronger randomness.
While the nonlinear effects are still strong, fairly long transitory
regimes become possible. Figure~\ref{bal1} shows an interesting
phenomenon of a collapse regime switching back into bubble
significantly above $P_0.$ It is clear by looking at the corresponding
Ornstein-Uhlenbeck process graph that randomness drives these switches. In
both simulations, social response term increases the effective
volatility of dynamics, stronger on the upside (creating bubbles)
but also on the downside after collapse (this difference in the strength
of the effect has clearly to do with
very quick growth of the mean reversion function once the price is
below $P_0$).

\begin{figure}[!ht]
\begin{center}
\scalebox{0.85}{\includegraphics{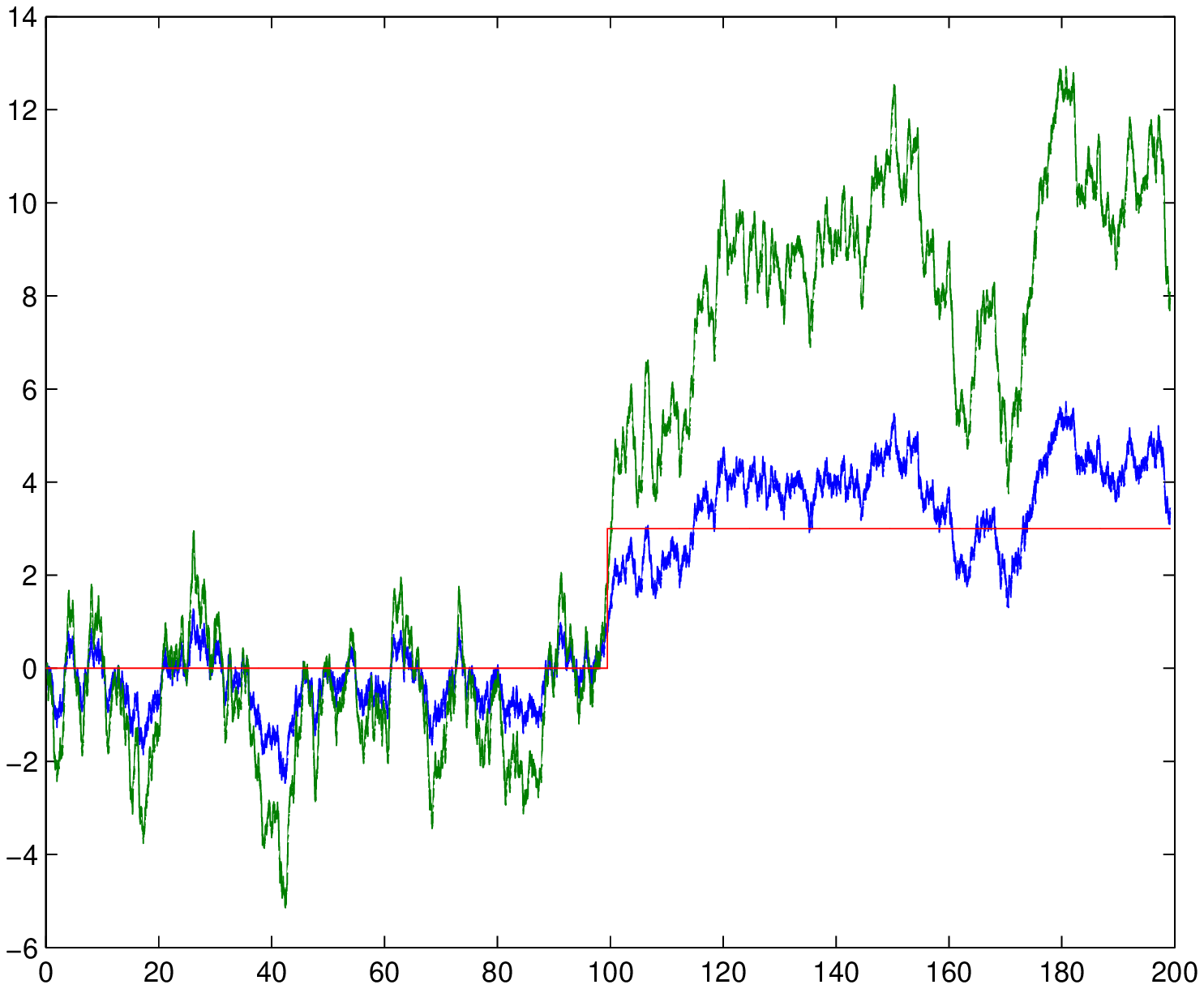}} \caption{A large jump in fundamentals causes
bubble ignition} \label{bj1}
\end{center}
\end{figure}

Next we investigate the likelihood of a jump in fundamentals igniting a bubble.
We find that the probability of the bubble ignition becomes significant if the
jump becomes comparable in size to $\nu.$ The Figure~\ref{noj1} shows a simulation where
the jump does not cause a bubble, while Figure~\ref{bj1} shows a larger jump causing
a bubble to ignite. On both figures, $P(t),$ $P_0(t)$ and the Ornstein-Uhlenbeck process
corresponding to $\nu=0$ case are graphed.

\begin{figure}[!ht]
\begin{center}
\scalebox{0.85}{\includegraphics{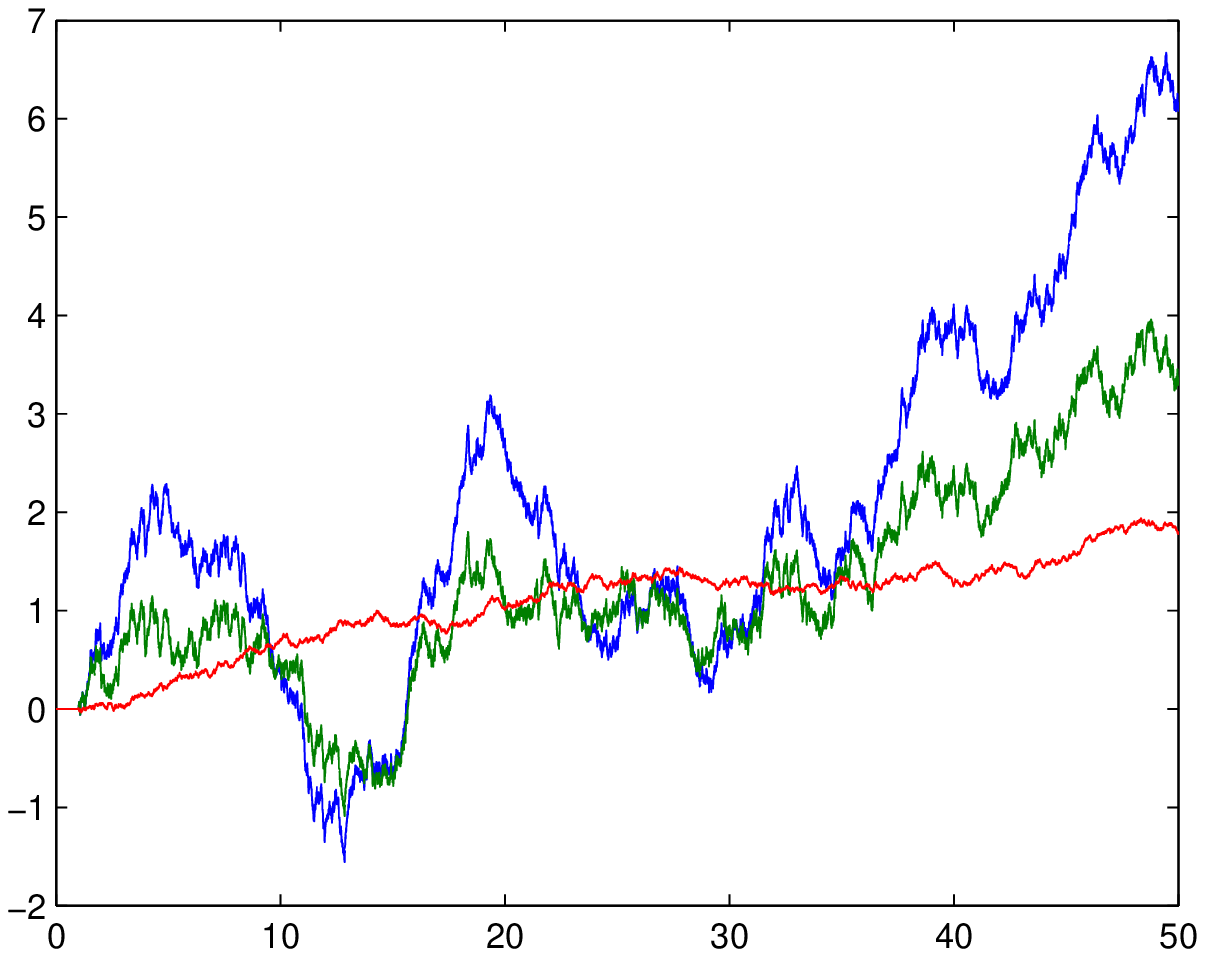}} \caption{The balanced
phase: varying $P_0$} \label{varp0}
\end{center}
\end{figure}

The final graph shows a simulation
where the stable value $P_0$ depends on time (Figure~\ref{varp0}).
The particular model we used is a random walk with a drift; the
volatility of $P_0$ is much lower than that of $P.$ All other
parameters of the bubble equation are the same as in the run
corresponding to Figure~\ref{bal2}. The graphs on Figure~\ref{varp0}
are those of $P_0(t),$ $P(t)$ and the corresponding Ornstein-Uhlenbeck process.

\section{Discussion and Further Directions}\label{disc}

The main goal of this paper was to present a simple
differential equations model of the effects of human psychology in
asset pricing. We tried to keep the model as simple as possible conditioned on the
fact that it should exhibit sufficiently rich set of behaviors.
A very interesting question that we did not address in this paper is the
statistical properties of price returns generated by \eqref{bubble}. It is well known
that empirical daily price returns of various financial instruments such as for example major indexes demonstrate deviations
from normal distribution, in particular power-like tails in the distribution of the returns.
Typically, lack of strong correlations in the ACF of daily returns is observed (even though
for some indexes and for some historical periods significant correlation can be present
(see e.g. \cite{LoMcK}). On the other hand, the volatility of returns exhibits clustering
with ACF decaying at a slow power rate. A number of models have been proposed that can fairly accurately
account for these statistical properties (see e.g. \cite{LM1,farjoshi,farlev}, and also \cite{SZSL,Sorn1}
for reviews). These models, however, tend to be more sophisticated than the one considered here.
Can one obtain similar results for a simpler model \eqref{bubble}? It is likely that some modifications
will be needed, in particular to reduce correlations in the return series.
We plan to discuss the statistical properties of the model
and some of its modifications in a future publication.

Another possible future direction of research involves bubble formation in
spatially distributed systems, such as the real estate markets.
The
price is now a function of both space and time, $P(x,t).$ In the
simplest case, the problem can be set on a graph, where each
vertex $v_j,$ $j=1,\dots, N$ corresponds to a town or a city. The
interaction between prices $p_i$ and $p_j$ in cities $v_i$ and $v_j$
is defined by a diffusion coefficient $\kappa_{ij}.$ This
coefficient models the degree of contact these cities have,
geographical proximity and generally the degree to which the real
estate prices in one city are likely to influence the prices in the
other. To each $v_i$ we also assign a number $q_i,$ which measures
the size of $v_i$'s market. In addition, the stable value $p^0_i$
varies from city to city. Then one possible model for price dynamics
is given by
\[ dp_i(t)= \sum_{j=1}^N \kappa_{ij}(p_i-p_j)\frac{1}{a_j}
+f(p_i,p^0_i)+\sigma dB^i_t+\nu S(p_i(t)-p_i(t-1)). \]
The relative simplicity of the equation \eqref{bubble} describing price in every
graph vertex (city) makes the model look approachable.
 Some of the interesting
questions to study in this case are possibility and likelihood of
bubble contagion (or bubble front propagation) doe to price diffusion, or possibility of
existence of localized bubbles.

Yet another interesting problem is suggested by numerical
simulations in the big rare bubbles regime. Is it true that under
certain scaling assumptions, the equation \eqref{bubble} can be rigorously
shown to converge to a nontrivial limit, such as suggested in
Section~\ref{her}? Does one get simply three possible clear cut regimes with
Poisson-like switching between them? A result like that would
establish an analytical link between relatively sophisticated bubble
equation and simpler discrete bubble models reminiscent of the one
appearing in Blanchard-Watson \cite{BW1982}.

\vspace{1cm}

{\bf Acknowledgement.} AK has been supported in part by NSF grant
DMS-0653813. LR has been supported in part by the NSF grant
DMS-0604687. AK thanks Igor Popov and Andrej Zlatos for useful
discussions. AK also thanks the University of Chicago for
hospitality.

\end{document}